\documentclass[11pt]{article}
\usepackage[top=1in,bottom=1in,right=.8in,left=.8in]{geometry}
\usepackage{amssymb,amsmath,amsthm,xcolor}
\definecolor{forestgreen}{rgb}{0.13, 0.55, 0.13}
\usepackage[colorlinks, linkcolor=blue,citecolor=forestgreen,urlcolor = black, bookmarks=true]{hyperref}
\usepackage[T1]{fontenc}
 
\usepackage[nameinlink, noabbrev, capitalize]{cleveref}
\usepackage{tikz}
\usepackage{arydshln}
\usepackage{verbatim}
\usepackage{titlesec}
\usepackage{subfigure}
\usepackage[linesnumbered,ruled]{algorithm2e}
\usepackage{bm}
\usepackage{commath}
\allowdisplaybreaks

\sloppy
\titleclass{\subsubsubsection}{straight}[\subsection]

\newcounter{subsubsubsection}[subsubsection]
\renewcommand\thesubsubsubsection{\thesubsubsection.\arabic{subsubsubsection}}

\titleformat{\subsubsubsection}
  {\normalfont\normalsize\bfseries}{\thesubsubsubsection}{1em}{}
  \titlespacing*{\subsubsubsection}
  {0pt}{3.25ex plus 1ex minus .2ex}{1.5ex plus .2ex}

\makeatletter  
\def\toclevel@subsubsubsection{4}
\def\l@subsubsubsection{\@dottedtocline{4}{7em}{4em}}

\makeatother
\setcounter{secnumdepth}{4}
\setcounter{tocdepth}{4}

\crefname{equation}{}{}
\crefrangeformat{equation}{(#3#1#4) --~(#5#2#6)}
\crefmultiformat{equation}{(#2#1#3)}{, (#2#1#3)}{, (#2#1#3)}{, (#2#1#3)}
\crefname{lemma}{Lemma}{Lemmas}
\crefname{section}{Section}{Sections}
\crefname{subsubsubsection}{Section}{Sections}
\crefname{remark}{Remark}{Remarks}
\crefname{figure}{Figure}{Figures}
\crefname{table}{Table}{Tables}
\Crefname{lemma}{Lemma}{Lemmas}
\crefname{theorem}{Theorem}{Theorems}
\Crefname{theorem}{Theorem}{Theorems}

\newtheorem{theorem}{Theorem}[section]

\newtheorem{remark}[theorem]{Remark}

\newtheorem{lemma}[theorem]{Lemma}

\newtheorem{proposition}[theorem]{Proposition}

\crefname{THM}{Theorem}{Theorems}

\theoremstyle{definition}

\theoremstyle{definition}
\newtheorem{definition}[theorem]{Definition}

\title{Polarization in Geometric Opinion Dynamics}

\author{Jason Gaitonde\thanks{Supported by NSF grant CCF-1408673 and AFOSR grant FA9550-19-1-0183.} \\
Cornell University\\
\texttt{jsg355@cornell.edu}
\and
Jon Kleinberg\thanks{Supported in part by a Simons Investigator Award, a
Vannevar Bush Faculty Fellowship, a MURI grant, a MacArthur Foundation grant, and AFOSR grant FA9550-19-1-0183.} \\
Cornell University\\
\texttt{kleinber@cs.cornell.edu}
 \and
\'Eva Tardos\thanks{Supported in part by NSF grants CCF-1408673, CCF-1563714, and AFOSR grant FA9550-19-1-0183.} \\
Cornell University\\
\texttt{eva.tardos@cornell.edu}}

\begin{document}

\maketitle
\begin{abstract}
In light of increasing recent attention to political polarization, understanding how polarization can arise poses an important theoretical question. While more classical models of opinion dynamics seem poorly equipped to study this phenomenon, a recent novel approach by H\k{a}z\l{}a, Jin, Mossel, and Ramnarayan (HJMR) \cite{DBLP:journals/corr/abs-1910-05274} proposes a simple \emph{geometric} model of opinion evolution that provably exhibits strong polarization in specialized cases. Moreover, polarization arises quite organically in their model: in each time step, each agent updates opinions according to their correlation/response with an issue drawn at random. However, their techniques do not seem to extend beyond a set of special cases they identify, which benefit from fragile symmetry or contractiveness assumptions, leaving open how general this phenomenon really is.
    
In this paper, we further the study of polarization in related geometric models. We show that the exact form of polarization in such models is quite nuanced: even when strong polarization does not hold, it is possible for \emph{weaker} notions of polarization to nonetheless attain. We provide a concrete example where weak polarization holds, but strong polarization provably fails. However, we  show that strong polarization  provably holds in many
variants of the HJMR model, which are also robust to a wider array of distributions of random issues---this suggests that the form of polarization introduced by HJMR is more universal than suggested by their special cases. We also show that the
weaker notions connect more readily to the theory of Markov chains on general state spaces.
\end{abstract}

\section{Introduction}

People's opinions naturally evolve in response to a variety of external factors, like interactions with other individuals, campaign messaging, media reports, political identities, and random events. Understanding how opinions evolve and the emergent qualitative features of these dynamics remains a subject of immense interest in the computer science, economics, and social science communities.

In recent years, a crucial phenomenon of interest is that of \emph{polarization}, where agents roughly partition into groups holding diametric views. That is, rather than agents holding a rich spectrum of beliefs, individuals instead typically belong to opposite clusters even in cases where beliefs on separate topics ostensibly ought not be correlated. This phenomenon is somewhat widely observed to have accelerated in the last couple of decades; for instance, Gentzkow, Shapiro, and Taddy show that political partisanship is more easily inferrable in recent years than previously from textual analysis using machine learning methods \cite{gentzkow2016measuring}. While polarization and issue realignment is perhaps most familiarly observed along political dimensions (for instance, the correlation between beliefs on climate change and gun rights), these effects are not limited to just these facets \cite{dellaposta2015liberals}.

A prototypical way to model the evolution of the beliefs of agents, and thereby address phenomena like polarization, is to assume that the opinion of each agent is scalar- or vector-valued and evolves according to some prescribed update rule. Many classical models, like the DeGroot and Friedkin-Johnsen models \cite{degroot1974reaching,friedkin1990social}, study opinion evolution as a form of social learning where agents update beliefs based on the graph-weighted opinions of their neighbors within an ambient network. However, the original forms of these models do not appear to provide a satisfactory mechanism to explain polarization, as the dynamics provably lead to \emph{less} polarization than in the initial configuration (see \Cref{sec:related} for a larger discussion).

To understand how polarization can arise in a more intrinsic way, H\k{a}z\l{}a, Jin, Mossel, and Ramnarayan (HJMR) recently introduced a novel and simple geometric model of opinion dynamics \cite{DBLP:journals/corr/abs-1910-05274}. In their model, each agent's opinion is a unit length vector where each dimension represents a particular relevant political axis. Their dynamics work as follows: at each time, a new random direction representing a new issue, political figure, or influencer is drawn from some fixed distribution. In response, each agent evaluates the correlation between their current opinions and this new issue and moves either toward or away from this new direction depending on the strength and sign of this correlation. The resulting vector is then renormalized to unit length (see \Cref{sec:preliminaries} for the exact model). These random directions model the natural intuition that opinion evolution is often driven by concrete, possibly random events. For instance, opinions potentially change dramatically in response to campaign messages or candidates during election seasons, when new political coalitions form and ideologies can themselves prove malleable to accommodate the new compositions of these groups. Because each vector is normalized, polarization has a very clean geometric interpretation: a set of opinion vectors is polarized if they are all equal up to sign. 

One of the key findings of HJMR is a proof that these simple dynamics \emph{intrinsically and strongly polarize} when restricted to a set of specialized settings. Concretely, when opinions are two-dimensional (so lie on the unit circle) and the new issues are drawn uniformly at random from the circle, or when vectors have arbitrary dimension but new issues correspond to one of two ``duelling influencers'' that are sufficiently close and drawn with equal probability, then almost surely the distance between any two starting opinions will converge to 0 or 2 via these random dynamics; in other words, almost surely any two agents will have opinions either converging to each other or to the negative of each other so that they are equal up to sign.

However, it is not at all clear whether or not this almost sure polarization is specific to their exact model formulation: for instance, their technique to show strong polarization in the two-dimensional case with uniform random directions does not obviously generalize to higher dimensions, nor to non-uniform distributions---without this high degree of symmetry, it is not clear whether the dynamics still model polarization (see \Cref{sec:overview} for more discussion).  Moreover, their dynamics are completely \emph{oblivious} in the sense that each agents' opinions evolve randomly in response to (the same) new issues but otherwise have no social influence over each other. A natural and important question in this framework is thus whether, or to what extent, polarization can occur more generally in related variants that do not satisfy these stringent constraints.

To address this, we can abstractly define geometric opinion dynamics more generally as follows: the key components of the HJMR model are that (i) the dynamics are random to model the impact of new political issues that arise, and in fact form a Markov chain on the hypersphere, and (ii) the set of polarized vectors is invariant under the dynamics. This latter property is of course necessary to prove that polarization occurs, while the former provides a convenient mathematical framework to analyze these dynamics. This motivates the following generalization of their dynamics:

\begin{definition}
Let $n\in \mathbb{N}$ be an arbitrary, but otherwise fixed parameter denoting the number of agents, and let $d$ denote the dimension of opinions. In a general model of \textbf{geometric opinion dynamics}, the evolution of each agent's opinion follows a discrete-time Markov chain in $\mathbb{S}^{d-1}$ given by a recurrence of the following form:
\begin{equation}
\label{eq:dynamics}
    \mathbf{X}_{t+1}^{(i)} \propto \mathbf{X}_t^{(i)} + f_i(\mathbf{X}_{t}^{(1)},\ldots,\mathbf{X}_{t}^{(n)},\xi_t), 
\end{equation}
where $\mathbf{X}_0=(\mathbf{X}_{0}^{(1)},\ldots,\mathbf{X}_{0}^{(n)})\in (\mathbb{S}^{d-1})^n$ is a given starting configuration of opinions, $\xi_t\in \mathbb{S}^{d-1}$ is drawn i.i.d. over time (and independent of all other random variables) from some distribution $\mathcal{D}$, and each $f_i:(\mathbb{S}^{d-1})^{n+1}\to \mathbb{R}^d$ is an explicit, fixed updating function.
\end{definition} 
The implicit normalization in this formulation ensures that each opinion returns to the unit sphere; in particular, we assume that the unnormalized quantity above is nonzero so that the projection onto the sphere is well-defined. The joint dynamics of this system thus take place as a Markov chain in $\prod_{i=1}^n \mathbb{S}^{d-1}$. In such models, the random vector $\xi_t$ again denotes a common, random stimulus that each agent in the system updates opinions with respect to, but now possibly also as a function of the opinions of the other agents in the system. For example, $\xi_t$ may model a particular issue or figure during election season that splits society into two camps, ``for'' and ``against,'' and opinions update according to these coalitions. Crucially, we also assume that the distribution $\mathcal{D}$ that these vectors are drawn from remains fixed throughout time so that these dynamics form a time-homogeneous Markov chain.

With this definition, we extend the results of HJMR in this paper by pursuing a more systematic study of polarization in these general geometric opinion dynamics models that take the form of \Cref{eq:dynamics}. The motivating questions we consider are: \emph{does strong polarization hold more generally, or is it specific to the HJMR model formulation? Can strong polarization be shown with nontrivial network interactions, thereby incorporating a key feature of models like the DeGroot and Friedkin-Johnsen dynamics? Are there other interesting notions of polarization that hold in these settings, even if the strong form does not hold?}

We show that whether polarization holds in such models is rather surprisingly nuanced and requires extending the notion of polarization beyond what is proven in HJMR. In particular, we show that there exist models that \emph{weakly} polarize in a formal sense, but provably do not satisfy the stronger form in HJMR. We show that these weaker forms of polarization are connected to the existence of nontrivial invariant distributions on the induced Markov chain.  Nonetheless, we also prove that strong polarization holds for nontrivial variants of the HJMR model which are also robust to more general distributions of update vectors---in fact, we show that this holds in a model that has network interactions in addition to the random updates. We hope that some of our techniques will prove useful in studying polarization in such models more generally.

\subsection{Overview of Results and Techniques}
\label{sec:overview}
We contribute to the theoretical understanding of polarization in geometric opinion models initiated in \cite{DBLP:journals/corr/abs-1910-05274}. In \Cref{sec:preliminaries}, we begin by identifying, though not necessarily requiring, several natural natural properties that such dynamics might satisfy and their relation to the problem of polarization. These will be convenient in our later arguments. We also define formally several distinct geometric opinion models, including the HJMR model, that will be the subject of our analysis later on.

Our main results in \Cref{sec:almostsure} show that \emph{strong polarization is more universal in such models than perhaps it would appear from the specialized cases studied by HJMR}. In particular, we show that the strong form of polarization exhibited by HJMR in restricted versions of their model holds in two other variants we consider. The first model we study, which we term the \emph{signed HJMR model}, considers variant of the HJMR dynamics that replaces an inner product with a sign. With these altered dynamics, we show that strong polarization holds in any dimension and is robust to the choice of distribution on update vectors so long as it is sufficiently close to the uniform measure on the sphere. We also show that the same holds in the \emph{party model}, which notably incorporates \emph{network effects} where agents exert influence over each other---such effects are not present in HJMR and it is not clear how to apply their methods to this setting.

To prove our results in \Cref{sec:almostsure}, we resort to conceptually different techniques from those of HJMR. Their work shows strong polarization primarily by appealing to either martingale convergence (in the case of $d=2$ and with a uniform distribution), which relies on rather delicate symmetries, or deterministic contraction (in two-point models representing ``duelling influences'', where the challenge of proving contraction is geometric rather than probabilistic). To establish strong polarization in these variants, we appeal to more general zero-one laws to first simplify the problem to showing just that there is \emph{some} nonzero probability of strongly polarizing. We obtain our main results from this quite general reduction by employing clean probabilistic and geometric estimates to give the required uniform bounds in our analysis to prove strong polarization.

Our next main contribution is to connect the polarization phenomenon to the general theory of Markov chains in uncountable state spaces in \Cref{sec:markov}. In general, Markov chains in such spaces have considerably more delicate behavior than with countable or finite state spaces. To do so, we define weaker notions of polarization than the strong form considered previously; these notions are motivated by standard notions of convergence from probability theory. We first show that in most cases, whether these weaker forms of polarization hold in geometric opinion dynamics is in fact equivalent to the existence of nontrivial invariant distributions of the Markov chain defined by the dynamics. We then turn to showing the utility of these more nuanced notions of polarization with a concrete example: we conclude by proving that the original HJMR model with orthonormal updates satisfies our notion of weak polarization, but provably \emph{does not} satisfy strong polarization---in fact, we show that almost surely, almost every starting configuration will not strongly polarize. We do so by connecting these dynamics to pathwise properties of infinite balls-in-bins processes, which we analyze by applying yet another zero-one law of Hewitt and Savage.

\subsection{Related Work}
\label{sec:related}
Our work broadly relates to the vast body of theoretical work in opinion dynamics that studies concrete models wherein agents update numerical beliefs and the resulting properties of these dynamics; for a comprehensive survey of this style of work, see \cite{castellano2009statistical}.
The particular geometric model of opinion dynamics considered by H\k{a}z\l{}a, Jin, Mossel, and Ramnarayan \cite{DBLP:journals/corr/abs-1910-05274} motivates many of the considerations in this work. To our knowledge, their work is first to introduce the dynamics in \Cref{eq:dynamics} in the specific form given below in \Cref{eq:hjmr}. We are not aware of related work exploring polarization in variants of their model.

Polarization more generally has been studied in other models of opinion dynamics, typically in the more well-understood DeGroot and Friedkin-Johnsen dynamics mentioned above. Both models assume that agents update opinions according to the graph-weighted average of their neighbors, and both of these models have been extensively studied for their analytical elegance, tractability, and deep connections to well-studied topics like finite-state Markov chains and random walks \cite{golub2010naive,DBLP:conf/sdm/GionisTT13}. However, an idiosyncratic feature of these models is that they are often inherently contractive: opinions get closer due to the dynamics---for instance, in the DeGroot model, agents will necessarily converge to a common limiting opinion under very mild connectedness assumptions. Therefore, the price of the analytic simplicity of this model is that no nontrivial polarization can occur. The Friedkin-Johnsen model typically will not exhibit perfect consensus, but the network effects of the process nonetheless cause initial opinions to contract closer to a common consensus. This suggests that these models simply are not well-equipped to offer a mechanism for this phenomenon. 

Recent theoretical work has attempted to combine the analytically desirable features of these models with questions of polarization by incorporating new elements to the model. For instance, behavioral biases like \emph{biased assimilation}, which underlies the intuition behind the HJMR model, have also been fruitfully studied in the context of DeGroot dynamics \cite{DBLP:journals/pnas/DandekarGL13}. Motivated by recent events, a related line of work has also attempted to understand polarization by incorporating the influence of external actors or platforms into the dynamics. Polarization then arises either explicitly or implicitly because of the (possibly orthogonal) objectives of these external parties \cite{DBLP:conf/www/MuscoMT18,DBLP:conf/wsdm/ChitraM20,DBLP:conf/sigecom/GaitondeKT20,immorlica}. A related model by Hegselmann and Krause \cite{hegselmann2002opinion} circumvents near-consensus by allowing agents to filter out overly dissimilar opinions before averaging at each step, but already this modest extension incurs a high cost: many basic questions about convergence remain open in this model, greatly complicating a more extensive understanding of how polarization arises \cite{DBLP:conf/innovations/BhattacharyyaBCN13}. Moreover, the polarization in this model is somewhat built-in by requiring that far opinions stop responding to each other. While these lines of work have shed significant light on how polarization can arise in such models, and certainly such forces can help explain an empirical increase in polarization in the last few decades, the polarization that arises does not arise organically via the original dynamics.

As we discuss in \Cref{sec:markov}, geometric models of opinion dynamics can be viewed as a special case of the more general theory of Markov chains in non-discrete state spaces. The results in this setting are significantly more involved than corresponding results on discrete state spaces; an accessible reference to some important results in this area can be found in the surveys of Hairer \cite{hairer2006ergodic,hairer2010convergence}. As we show, polarization is related to the set of invariant distributions on the Markov chain induced by the dynamics. Numerous techniques have been developed to determine the uniqueness of an invariant distribution in these settings, as well as the rate and mode of convergence to it if possible (see for instance, \cite{butkovsky2014subgeometric}). It would be interesting to find new ways to apply results from this area to provide quantitative bounds on the convergence to polarization when one can prove strong polarization holds.

\section{Preliminaries}
\label{sec:preliminaries}

As stated above, we are primarily interested in the \emph{polarization} properties of these random processes given by \Cref{eq:dynamics}. Throughout this paper, we reserve $n$ to denote the number of agents and $d$ to denote the dimensionality of opinions. We write $\|\cdot\|_p$ for the standard $\ell_p$-norm and  $\mathbb{S}^{d-1}=\{\mathbf{x}\in \mathbb{R}^d: \|\mathbf{x}\|_2=1\}$. We will write $P_{\mathbb{S}^{d-1}}:\mathbb{R}^d\setminus \{\mathbf{0}\}\to \mathbb{S}^{d-1}$ for the projection onto the unit sphere, i.e. $P_{\mathbb{S}^{d-1}}(\mathbf{x})=\mathbf{x}/\|\mathbf{x}\|_2$. We also write $\mathbf{e}_i$ for the $i$th standard basis vector in $\mathbb{R}^d$, where $d$ will be clear from context. For a set of vectors $\mathbf{x}_1,\ldots,\mathbf{x}_n\in \mathbb{R}^d$, we define $\text{cone}(\mathbf{x}_1,\ldots,\mathbf{x}_n)=\{\mathbf{z}\in \mathbb{R}^d: \mathbf{z}=\sum_{i=1}^n \alpha_i \mathbf{x}_i, \alpha_i\geq 0$\}. We write $\angle (\mathbf{x},\mathbf{y})$ for the angle between $\mathbf{x}$ and $\mathbf{y}$.

For given $n,d$, we then define $D \subseteq \prod_{i=1}^n \mathbb{S}^{d-1}$ to be the set of diagonal opinion vectors, i.e. the set of elements of the form $(\mathbf{x},\ldots,\mathbf{x})$ for a single vector $\mathbf{x}\in \mathbb{S}^{d-1}$. For a sign vector $\bm{\sigma}\in \{-1,1\}^n$, define $\bm{\sigma}(D):= \{\mathbf{X}\in \prod_{i=1}^n \mathbb{S}^{d-1}: \mathbf{X} = (\sigma_1\mathbf{x},\ldots,\sigma_n\mathbf{x}), \mathbf{x}\in \mathbb{S}^{d-1}\}$. Finally, we define the set $P$ of polarized vectors by $P = \bigcup_{\bm{\sigma}\in \{-1,1\}^n} \mathbf{\sigma}(D).$
In other words, $P$ is the set of tuples of vectors such that each vector is equal to the rest up to sign. This definition allows for consensus as a special case, but in many settings, consensus or near-consensus is exponentially unlikely (see \Cref{rmk:consensus}).

For a subset $A$ in some Euclidean space and a point $\mathbf{x}$, we define $\rho(\mathbf{x},A):=\inf_{\mathbf{y}\in A} \|\mathbf{x}-\mathbf{y}\|_2$ to denote the distance from a point to a set with respect to the Euclidean metric. With this in mind, we define strong polarization of a geometric opinion dynamics model as follows:
\begin{definition}
\label{def:strongpolar}
Let $\mathbf{X}_t:=(\mathbf{X}_t^{(1)},\ldots, \mathbf{X}_t^{(n)})$ be a discrete-time Markov chain as given by \cref{eq:dynamics}. Then $\mathbf{X}_t$ \textbf{strongly polarizes} (from $\mathbf{X}_0$) 
if almost surely, $\rho(\mathbf{X}_t,P)\to 0$; that is, the distance between $\mathbf{X}_t$ and the set of polarized vectors converges to zero almost surely.
\end{definition}

There are several natural properties that one might desire in these dynamics. Below, we consider, though do not require, the following properties which abstracts those of the original HJMR model:

\begin{definition}
If each function $f_i(\cdot,\xi)$ is continuous from $\prod_{i=1}^{n}\mathbb{S}^{d-1}$ to $\mathbb{R}^d$ for each fixed $\xi$, then we say the dynamics are \textbf{continuous}.
\end{definition} 
We will show in \Cref{sec:markov} that continuity implies various desirable properties for the dynamics. Note that if this is the case, combined with the fact that we assume the unnormalized update rule is always nonzero, it follows that the dynamics are jointly continuous as transitions from $\prod_{i=1}^n \mathbb{S}^{d-1}$ to itself. This follows because the the map taking the joint opinion vector to the unnormalized opinion vectors is continuous and nonzero in every coordinate, which is then composed coordinatewise with a continuous map on $\mathbb{R}^{d}\setminus \{\mathbf{0}\}$, and so is continuous.

\begin{definition}
The dynamics are \textbf{sign-invariant} if, for all $i\in [n]$, the function $f_i$ is \emph{odd} with respect to $\mathbf{x}_i$, but \emph{even} with respect to the other arguments (i.e. with respect to $\mathbf{x}_{-i}$ and $\xi$).
\end{definition}
Sign-invariance implies each agent reacts to the random update vector and the others the same regardless if any are negated. This feature will be present in all of the models we consider below, including the original HJMR model. The intuition behind sign-invariance is that from the perspective of each agent, if he or she were to react ``positively'' to the new issue or a different opinion, she would react ``negatively'' to the negative of that issue or different opinion---sign-invariance thus asserts that these reactions are balanced.
    
\begin{definition}
If $f_i(\mathbf{X}_t^{(i)},\mathbf{X}^{(-i)}_t,\xi) = f_j(\mathbf{X}_t^{(j)},\mathbf{X}^{(-j)}_t,\xi)$ when $\mathbf{X}_t^{(i)}=\mathbf{X}_t^{(j)}$ and $\mathbf{X}_t^{(-i)}=\mathbf{X}_t^{(-j)}$, we say that the dynamics are \textbf{symmetric}, as the updates do not depend on the identities of the agents.
\end{definition}
    
\begin{definition}
If each function $f_i$ does not depend on $\mathbf{X}_t^{(-i)}$ for each $i$ (so that $f_i$ depends only on $\mathbf{X}_t^{(i)}$ and $\xi_t$), then we say the dynamics are \textbf{oblivious}.
\end{definition} 
In this case, each component of the above process follows a Markov chain, and the joint dynamics form a particular coupling where each component responds to the same update vector. However, any polarization that arises happens indirectly because agents do not influence each other.

In this work, we treat such models in relatively full generality and also specialize to particular models where more specific techniques can establish various desirable properties. The concrete examples we will consider are listed below:

\begin{definition}[HJMR Model]
In the \textbf{HJMR Model} \cite{DBLP:journals/corr/abs-1910-05274}, the update for each agent $i$ takes the following form for some fixed scalar $\eta>0$:
    \begin{equation}
    \label{eq:hjmr}
        \mathbf{X}_{t+1}^{(i)} \propto \mathbf{X}_t^{(i)} + \eta\cdot \langle \mathbf{X}_t^{(i)},\xi_t\rangle \xi_t,
    \end{equation}
    where $\xi_t\sim \mathcal{D}$ is drawn i.i.d. over time from some distribution $\mathcal{D}$ on $\mathbb{S}^{d-1}$. In words, each agent moves in the (signed) direction of the random update vector proportionally to the correlation with their current opinion, and then renormalizes.
\end{definition}
    Note that this model satisfies continuity, sign-invariance, obliviousness, and symmetry (assuming $\eta$ is a constant over all agents).
\begin{definition}[Signed HJMR Model]
In the \textbf{signed HJMR model}, the update rule in \Cref{eq:hjmr} is amended to
    \begin{equation}
    \label{eq:hjmrsign}
        \mathbf{X}_{t+1}^{(i)} \propto \mathbf{X}_t^{(i)} + \eta\cdot  \text{sgn}(\langle \mathbf{X}_t^{(i)},\xi_t\rangle) \xi_t.
    \end{equation}
\end{definition}
Here, we define $\text{sgn}(0)=0$, but in our applications below, we will assume $\xi_t$ is drawn from a continuous distribution so that almost surely $\langle \mathbf{X}_t^{(i)},\xi_t\rangle\neq 0$. In this case, the amount the vector updates does not depend on the correlation. This choice is intended to model elections, where one is in favor either towards or against a particular candidate and is drawn ``all-or-nothing'' towards or against the views of this candidate. This model is \emph{not} continuous due to the sign function, but is still sign-invariant, oblivious, and symmetric (assuming $\eta$ is a constant over all agents).
    
\begin{definition}[Party Model]
Suppose each agent $i$ has multipliers $(\eta_1^{(i)},\ldots,\eta_n^{(i)})$ where $\eta^{(i)}_j\geq 0$ measures the influence of agent $j$ on agent $i$. The \textbf{party model} is defined by:
    \begin{equation}
    \label{eq:party}
        \mathbf{X}_{t+1}^{(i)} \propto \mathbf{X}_t^{(i)} + \left(\sum_{j\in [n]:\text{sgn}(\langle \mathbf{X}_t^{(j)},\xi_t\rangle)=\text{sgn}(\langle \mathbf{X}_t^{(i)},\xi_t\rangle)} \eta_j^{(i)}\mathbf{X}_t^{(j)}-\sum_{j\in [n]:\text{sgn}(\langle \mathbf{X}_t^{(j)},\xi_t\rangle)\neq\text{sgn}(\langle \mathbf{X}_t^{(i)},\xi_t\rangle)} \eta_j^{(i)}\mathbf{X}_t^{(j)}\right).
    \end{equation}
\end{definition}
    That is, each agent moves towards the vectors that came on the same ``side'' of the random issue $\xi_t$, and away from those on the opposite ``side'' of the issue. While this latter assumption may appear non-obvious, there is considerable empirical evidence for the sociological principle that ``out-group conflict builds in-group solidarity'' \cite{mccallion2007groups}: when a binary issue creates disagreement within a collection of people, the people on each side of the disagreement move toward those they agree with and away from those they disagree with \cite{fisher2016towards,sherif-harvey-white-hood-sherif61}. Once again, this model is not continuous due to the sign function, and also is \emph{not} oblivious as clearly the update rule depends on the values of the other agents. However, it remains sign-invariant, as negating one's own opinions interchanges the sums, and flipping either $\xi_t$ or any other agents opinions only permutes summands.

For any sort of polarization to arise, it is natural to ensure that the dynamics are such that \emph{if} the vector are completely polarized, then they will surely remain so. We provide one simple condition that is easily verified\footnote{Note that the converse of \Cref{lem:Pinv} trivially fails: one can simply ensure the dynamics are invariant on each such set and otherwise define them arbitrarily.}:

\begin{lemma}
\label{lem:Pinv}
Suppose that some geometric opinion dynamics satisfy symmetry and sign-invariance. Then $\bm{\sigma}(D)$ is invariant under the transitions for every $\bm{\sigma}\in \{-1,1\}^n$.
\end{lemma}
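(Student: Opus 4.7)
The plan is to fix an arbitrary sign pattern $\bm{\sigma}\in\{-1,1\}^n$, take a representative $\mathbf{X}_t=(\sigma_1\mathbf{x},\ldots,\sigma_n\mathbf{x})\in\bm{\sigma}(D)$ with $\mathbf{x}\in\mathbb{S}^{d-1}$ and an arbitrary draw $\xi_t=\xi\in\mathbb{S}^{d-1}$, and verify directly that $\mathbf{X}_{t+1}\in\bm{\sigma}(D)$. Since the normalization in \Cref{eq:dynamics} acts componentwise and preserves the direction, the task reduces to showing that each unnormalized update is of the form $\sigma_i\mathbf{v}$ for a common vector $\mathbf{v}\in\mathbb{R}^d$.

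The first key step is to reduce $f_i(\sigma_1\mathbf{x},\ldots,\sigma_n\mathbf{x},\xi)$ to a symmetric-looking quantity using sign-invariance. By evenness of $f_i$ in each argument $\mathbf{x}_j$ for $j\neq i$ (and in $\xi$, though we do not need to touch $\xi$), we may flip the sign of each $\sigma_j\mathbf{x}$ with $\sigma_j=-1$ individually without changing the value of $f_i$, obtaining
\[
f_i(\sigma_1\mathbf{x},\ldots,\sigma_{i-1}\mathbf{x},\sigma_i\mathbf{x},\sigma_{i+1}\mathbf{x},\ldots,\sigma_n\mathbf{x},\xi)=f_i(\mathbf{x},\ldots,\mathbf{x},\sigma_i\mathbf{x},\mathbf{x},\ldots,\mathbf{x},\xi).
\]
Then by oddness of $f_i$ in its $i$th argument, the right-hand side equals $\sigma_i\,f_i(\mathbf{x},\ldots,\mathbf{x},\xi)$.

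The second key step is to invoke symmetry: since all $n$ entries of the tuple $(\mathbf{x},\ldots,\mathbf{x})$ agree, symmetry yields $f_i(\mathbf{x},\ldots,\mathbf{x},\xi)=f_j(\mathbf{x},\ldots,\mathbf{x},\xi)$ for all $i,j$, so this common value is a single function $g(\mathbf{x},\xi)\in\mathbb{R}^d$ independent of $i$. Combining, the unnormalized update for agent $i$ is
\[
\sigma_i\mathbf{x}+\sigma_i\,g(\mathbf{x},\xi)=\sigma_i\bigl(\mathbf{x}+g(\mathbf{x},\xi)\bigr).
\]

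For the final step, set $\mathbf{v}:=\mathbf{x}+g(\mathbf{x},\xi)$; the standing assumption that the unnormalized update is nonzero gives $\mathbf{v}\neq\mathbf{0}$. Applying $P_{\mathbb{S}^{d-1}}$ coordinatewise and using $\|\sigma_i\mathbf{v}\|_2=\|\mathbf{v}\|_2$ together with $\sigma_i\in\{-1,1\}$, we obtain $\mathbf{X}_{t+1}^{(i)}=\sigma_i\mathbf{x}'$ where $\mathbf{x}':=\mathbf{v}/\|\mathbf{v}\|_2\in\mathbb{S}^{d-1}$ does not depend on $i$. Hence $\mathbf{X}_{t+1}=(\sigma_1\mathbf{x}',\ldots,\sigma_n\mathbf{x}')\in\bm{\sigma}(D)$, as desired. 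The only subtlety worth being careful about is the order of applying evenness (to neutralize the $\sigma_j$'s for $j\neq i$) before oddness (to extract $\sigma_i$); once that bookkeeping is done, symmetry and the multiplicativity of the norm do the rest.
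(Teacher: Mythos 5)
Your proof is correct and takes essentially the same approach as the paper's: both verify invariance of $\bm{\sigma}(D)$ directly from the definitions, using evenness of $f_i$ in the other agents' opinions to strip the signs $\sigma_j$ ($j\neq i$), oddness in agent $i$'s own opinion to extract $\sigma_i$, and symmetry to identify a common update vector. The only cosmetic difference is that the paper argues pairwise (equal components stay equal, negated components stay negated, up to the common normalization) while you reduce to the diagonal and factor out $\sigma_i$ globally; the content is identical.
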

Finally, note that if the dynamics are oblivious and strongly polarizes for $n=2$ agents, then it must do so for any finite $n$ by a simple union bound. Note that oblivious and symmetric dynamics ensure that the process is well-defined for any number of agents.
\begin{lemma}
\label{lem:2suff}
Suppose that the opinion dynamics are symmetric and oblivious. Then if any form of convergence holds for $n=2$ agents, the same holds for any finite number of agents.

\end{lemma}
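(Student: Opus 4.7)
The plan is to reduce the $n$-agent dynamics to the $n=2$ case by studying the marginal evolution of each pair of agents. First I would observe that by obliviousness combined with symmetry, there is a single update function $f$ such that every agent evolves as $\mathbf{X}_{t+1}^{(i)} \propto \mathbf{X}_t^{(i)} + f(\mathbf{X}_t^{(i)},\xi_t)$. Consequently, for any indices $i \neq j$, the joint marginal process $(\mathbf{X}_t^{(i)},\mathbf{X}_t^{(j)})_{t \geq 0}$ has the same law as the 2-agent instance of these dynamics initialized at $(\mathbf{X}_0^{(i)},\mathbf{X}_0^{(j)})$ and driven by the same i.i.d.\ sequence $(\xi_t)$. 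This coupling observation is essentially immediate from the definitions.

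Next I would invoke the assumed convergence for each of the $\binom{n}{2}$ pairs separately, and combine the results via a finite union bound. For almost sure notions---most importantly strong polarization in the sense of \Cref{def:strongpolar}---each pair's convergence event has probability one by hypothesis, and a finite intersection of probability-one events still has probability one; thus almost surely every pair simultaneously converges. Convergence in probability is handled by the analogous union bound at each $\varepsilon > 0$, and convergence in distribution by a continuous mapping / projection argument to each pair.

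Finally I would glue the pairwise conclusions into a joint conclusion. Writing $P^{(2)}$ for the polarized set with $n=2$ and $P$ for the general polarized set, suppose $\rho((\mathbf{X}_t^{(1)},\mathbf{X}_t^{(i)}),P^{(2)}) \to 0$ for every $i$. Then for all sufficiently large $t$, each such pair lies within some $\varepsilon$ of a signed pair $(\mathbf{x},\sigma_i \mathbf{x})$ for some $\sigma_i \in \{-1,+1\}$; because opinions have unit length, $\mathbf{x}$ and $-\mathbf{x}$ are separated by distance $2$, so for small enough $\varepsilon$ the choice of $\sigma_i$ is unique. Setting $\sigma_1 = +1$ yields a consistent sign vector $\bm{\sigma}$ such that $(\sigma_1\mathbf{X}_t^{(1)},\ldots,\sigma_n\mathbf{X}_t^{(1)}) \in \bm{\sigma}(D) \subseteq P$, and coordinatewise proximity then forces $\rho(\mathbf{X}_t,P) \to 0$.

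The only real subtlety, and hence the ``main obstacle,'' is this last gluing step: making precise that simultaneous pairwise proximity to $P^{(2)}$ implies joint proximity to $P$. As sketched above, this reduces to the sign $\sigma_i$ being unambiguously determined from the pair once the pairwise error is small, which in turn rests on the opinions living on $\mathbb{S}^{d-1}$ so that $\mathbf{x}$ and $-\mathbf{x}$ are strictly separated. The remainder is a clean finitary union bound, and no further properties of the dynamics beyond obliviousness and symmetry are required.
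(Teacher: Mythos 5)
Your argument is correct and is essentially the paper's own (the paper only sketches \Cref{lem:2suff} as "a simple union bound" enabled by obliviousness and symmetry): you couple each pair of agents to the $2$-agent chain driven by the same $\xi_t$'s, intersect/union-bound over the finitely many pairs, and glue via the observation that pairwise proximity to the $2$-agent polarized set forces joint proximity to $P$. The gluing step you flag is handled correctly, so no gap remains.
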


\section{Models with Strong Polarization}
\label{sec:almostsure}

In this section, we prove the strong polarization of the signed HJMR model and the party model. To do this, we first establish a simple, but powerful general principle that will significantly simplify the analysis that has the following simple intuition: suppose momentarily that geometric opinion dynamics were a \emph{finite-state} Markov chain and that ``polarization'' is an absorbing state of the Markov chain. Then from standard and simple Markov estimates, so long as it is possible to reach the state ``polarized'' from any starting point in some fixed finite number of steps, an easy calculation shows that almost surely the Markov chain will become ``polarized.'' In that case, it would suffice to show that from any starting state, there is \emph{some} nonzero probability of reaching ``polarized'' from any starting state in some finite number of steps.

In general, this idea is not so simple to formalize because the geometric opinion dynamics lie in a non-discrete state space, and moreover, it is often only possible to reach $P$ asymptotically, not in any finite number of steps. However, by appealing to more general zero-one laws, we show that this intuition nonetheless holds:

\begin{theorem}
\label{thm:levy}
For any geometric model of opinion dynamics that satisfies the Markov property, the following are equivalent:

\begin{enumerate}
    \item For every choice of starting vector $\mathbf{X}_0$, $\Pr_{\mathbf{X}_0}\left(\rho(\mathbf{X}_t,P)\to 0\right)=1.$
    
    \item For every choice of starting vector $\mathbf{X}_0$, $\Pr_{\mathbf{X}_0}\left(\rho(\mathbf{X}_t,P)\to 0\right)\geq c$
    for some constant $c>0$.
\end{enumerate}
\end{theorem}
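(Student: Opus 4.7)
The plan is to reduce the nontrivial direction $(2) \Rightarrow (1)$ to L\'evy's upward zero-one law combined with the Markov property; the reverse direction $(1) \Rightarrow (2)$ is immediate with $c = 1$. Concretely, I would fix a starting configuration $\mathbf{X}_0$, let $\mathcal{F}_t = \sigma(\mathbf{X}_0, \xi_0, \ldots, \xi_{t-1})$ denote the natural filtration of the dynamics, and consider the polarization event $A = \{\rho(\mathbf{X}_t, P) \to 0\}$.

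The first key step is to observe that $A$ is a \emph{shift-invariant} event: whether the trajectory polarizes depends only on its tail, so for every $s \geq 0$, $A$ coincides with $\{\rho(\mathbf{X}_{t+s}, P) \to 0 \text{ as } t \to \infty\}$, and in particular $A \in \sigma(\mathbf{X}_s, \mathbf{X}_{s+1}, \ldots)$. By the time-homogeneous Markov property, this lets me write
\[
\mathbb{E}[\mathbf{1}_A \mid \mathcal{F}_s] \;=\; g(\mathbf{X}_s), \qquad \text{where } g(\mathbf{x}) := \Pr\nolimits_{\mathbf{x}}(A),
\]
and hypothesis (2) says precisely that $g(\mathbf{x}) \geq c > 0$ uniformly on $\prod_{i=1}^{n} \mathbb{S}^{d-1}$.

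The second key step is to apply L\'evy's upward theorem to the bounded $\mathcal{F}_\infty$-measurable indicator $\mathbf{1}_A$, which gives $\mathbb{E}[\mathbf{1}_A \mid \mathcal{F}_s] \to \mathbf{1}_A$ almost surely as $s \to \infty$. Combining with the uniform lower bound above forces $\mathbf{1}_A \geq c$ almost surely; since $\mathbf{1}_A$ takes values in $\{0,1\}$, this upgrades to $\mathbf{1}_A = 1$ almost surely, i.e.\ $\Pr_{\mathbf{X}_0}(A) = 1$ for every starting configuration, which is (1).

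The only place that requires genuine care is the shift invariance of $A$ used in the Markov reduction $\Pr(A \mid \mathcal{F}_s) = g(\mathbf{X}_s)$: I must check that convergence of $\rho(\mathbf{X}_t, P)$ really is a tail property of the trajectory (which it is, as it asks about the $t \to \infty$ behavior of a deterministic continuous functional of the sequence). Beyond this, the argument uses only the Markov property and basic martingale convergence, which is pleasantly consistent with the generality in which the theorem is stated (no continuity, sign-invariance, or obliviousness is invoked), and explains why the same reduction will later be applicable to the discontinuous signed HJMR and party models.
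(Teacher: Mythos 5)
Your proposal is correct and follows essentially the same route as the paper's proof: both apply L\'evy's upward theorem to the indicator of the polarization event and use the Markov property to identify $\mathbb{E}[\mathbf{1}_A \mid \mathcal{F}_T]$ with the polarization probability started from $\mathbf{X}_T$, which is uniformly bounded below by $c$, forcing the almost sure limit $\mathbf{1}_A$ to equal $1$. Your explicit remark that $A$ is a tail (shift-invariant) functional of the trajectory simply makes precise the step the paper records as ``by the Markov property,'' so there is no substantive difference.
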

\begin{proof}
One direction is trivial, so we assume the second condition. Consider the dynamics started at any choice of starting vector $\mathbf{X}_0$, and consider the event $A = \{\rho(\mathbf{X}_t,P)\to 0\}.$
Let $\mathcal{F}_t=\sigma(\xi_0,\ldots,\xi_{t-1})$ be the filtration generated by the random updates up to time $t$ and let $\mathcal{F}_{\infty}=\sigma(\xi_0,\ldots)$ be the filtration generated by all of them. By standard arguments, $A\in \mathcal{F}_{\infty}$.

For each $T\geq 0$, define $A_T:=\mathbb{E}[\bm{1}(A)\vert \mathcal{F}_{T}]$. As $\bm{1}(A)$ is an indicator random variable, L\'evy's upward theorem (Theorem 4.2.11 of \cite{durrett2019probability}) implies that $A_T=\mathbb{E}[\bm{1}(A)\vert \mathcal{F}_{T}]\to \mathbb{E}[\bm{1}(A)\vert \mathcal{F}_{\infty}]=\bm{1}(A)$ almost surely. But observe that by the Markov property, $A_T = \Pr_{\mathbf{X}_T}\left(\rho(\mathbf{Z}_t,P)\to 0\right)$, where $\mathbf{Z}_t$ gives the dynamics started at $\mathbf{Z}_0=\mathbf{X}_T$. In particular, $A_T\geq c>0$ pointwise. Because $A_T\to \bm{1}(A)$ almost surely and is bounded from below almost surely by a strictly positive quantity, the only way this can happen is if $\bm{1}(A)\equiv 1$ almost surely (over the realizations of the $\xi_t$), so that $A$ holds almost surely. As $\mathbf{X}_0$ was arbitrary, this completes the harder direction.
\end{proof}

\subsection{Strong Polarization in Signed HJMR Model}
Our first main result is that strong polarization holds in the signed HJMR model with a common value of $\eta>0$, and that this holds for a general class of distributions: 

\begin{theorem}
\label{thm:shjmr}
Suppose there are $n\geq 1$ agents in the signed HJMR model given by \Cref{eq:hjmrsign} where each $\xi_t$ is drawn i.i.d. from a distribution $\mathcal{D}$ that is equivalent to the uniform (Haar) measure $\mu$, i.e. there exists $C,C'>0$ such that for every measurable set $A\subseteq \mathbb{S}^{d-1}$, $C\mu(A)\leq \Pr(\xi_t\in A)\leq C'\mu(A)$. Then this system strongly polarizes from any choice of starting vector $\mathbf{X}_0$.
\end{theorem}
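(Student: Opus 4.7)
By \Cref{thm:levy}, it suffices to exhibit a uniform constant $c > 0$ with $\Pr_{\mathbf{X}_0}(\rho(\mathbf{X}_t, P) \to 0) \geq c$ from every starting $\mathbf{X}_0$. Since the signed HJMR update in \Cref{eq:hjmrsign} is manifestly symmetric and oblivious, \Cref{lem:2suff} reduces the task to $n = 2$.

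The central technical ingredient is a \emph{one-step geometric contraction lemma}: there exist universal constants $p_0, \gamma_0 > 0$ (depending only on $\eta, d, C, C'$) such that for any pair $\mathbf{X}_t = (\mathbf{X}_t^{(1)}, \mathbf{X}_t^{(2)})$,
\[
\Pr_{\xi_t \sim \mathcal{D}}\bigl(\rho(\mathbf{X}_{t+1}, P) \leq (1 - \gamma_0)\,\rho(\mathbf{X}_t, P)\bigr) \geq p_0.
\]
By sign-invariance applied to $\mathbf{X}^{(2)}$ we may WLOG assume $\theta := \angle(\mathbf{X}^{(1)}, \mathbf{X}^{(2)}) \leq \pi/2$, so $\rho = \|\mathbf{X}^{(1)} - \mathbf{X}^{(2)}\|$. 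Let $N$ be the spherical cap of angular radius $\pi/12$ around the bisector $w := P_{\mathbb{S}^{d-1}}(\mathbf{X}^{(1)} + \mathbf{X}^{(2)})$. For $\xi \in N$, $\angle(\xi, \mathbf{X}^{(i)}) \leq \theta/2 + \pi/12 \leq \pi/3$, so $\langle \mathbf{X}^{(i)}, \xi\rangle \geq 1/2$ uniformly and $s_1 = s_2 = +1$. The pre-normalized updates $A := \mathbf{X}^{(1)} + \eta\xi$ and $B := \mathbf{X}^{(2)} + \eta\xi$ then satisfy $A - B = \mathbf{X}^{(1)} - \mathbf{X}^{(2)}$ and $\|A\|, \|B\| \geq \sqrt{1+\eta}$, so the elementary identity $\|P_{\mathbb{S}^{d-1}}(A) - P_{\mathbb{S}^{d-1}}(B)\|^2 \leq \|A-B\|^2/(\|A\|\|B\|)$ yields contraction by $1/\sqrt{1+\eta} =: 1 - \gamma_0$. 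Haar equivalence gives $\Pr_{\xi \sim \mathcal{D}}(\xi \in N) \geq C\mu(N) =: p_0 > 0$ uniformly.

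To pass from this to a uniform bound $\Pr(\rho_t \to 0) \geq c > 0$, I would analyze the stopping time $\tau := \inf\{t : \text{sgn}\langle\mathbf{X}_t^{(1)}, \xi_t\rangle \neq \text{sgn}\langle\mathbf{X}_t^{(2)}, \xi_t\rangle\}$. The $\mathcal{D}$-measure of the sign-disagreement lune is $O(\rho_t)$, so $\Pr(\tau = t+1 \mid \mathcal{F}_t, \tau > t) \leq C\rho_t$. On $\{\tau > t\}$ both agents are translated by the same vector $\eta\xi_t$; the same inequality with $\|A\|, \|B\| \geq 1$ gives $\rho_{t+1} \leq \rho_t$ always, and since the cap $N$ lies inside the sign-agreement set, integrating over $\xi_t$ sharpens this to $\mathbb{E}[\rho_{t+1} \mathbf{1}_{\{\tau > t+1\}} \mid \mathcal{F}_t] \leq (1 - p_0\gamma_0)\rho_t \mathbf{1}_{\{\tau > t\}}$. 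Iterating gives $\mathbb{E}[\rho_t \mathbf{1}_{\{\tau > t\}}] \leq (1 - p_0\gamma_0)^t \rho_0$, and summing,
\[
\Pr(\tau < \infty) \leq \sum_{t \geq 0} C\,\mathbb{E}[\mathbf{1}_{\{\tau > t\}}\rho_t] \leq \frac{C\rho_0}{p_0\gamma_0}.
\]
Thus $\Pr(\tau = \infty) > 0$ once $\rho_0 < p_0\gamma_0/C$; on $\{\tau = \infty\}$, $\rho_t$ is non-increasing with $\mathbb{E}[\rho_t \mathbf{1}_{\{\tau = \infty\}}] \to 0$, so $\rho_t \to 0$ almost surely on this event by Fatou. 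For arbitrary $\rho_0$, a preliminary burn-in of $T = O(\log(1/\rho_0^*))$ consecutive cap events---which have joint probability at least $p_0^T > 0$ uniformly, by a tower conditioning argument since each cap event has probability $\geq p_0$ conditionally on the history---reduces $\rho_0$ below the required threshold $\rho_0^*$, after which the stopping-time estimate applies.

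\textbf{Main obstacle.} The most delicate step is the one-step contraction lemma, particularly ensuring uniformity of the contraction factor $\gamma_0$ across configurations with $\theta \in [0, \pi/2]$: the bisector-based cap construction must keep both $\|A\|$ and $\|B\|$ bounded away from $1$ by a $\theta$-independent amount, which is what the estimate $\langle \mathbf{X}^{(i)}, \xi\rangle \geq 1/2$ ensures. A secondary subtlety lies in the tower argument for the burn-in, where one must verify that conditioning on past cap events preserves the $p_0$ lower bound on the next; this follows because the cap at each step is a measurable function of $\mathbf{X}_t$ whose $\mathcal{D}$-measure is bounded below uniformly in $\mathbf{X}_t$.
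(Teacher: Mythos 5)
Your proposal is correct, and its skeleton coincides with the paper's: reduce to $n=2$ via \Cref{lem:2suff}, use sign-invariance to assume an acute pair, and invoke \Cref{thm:levy} so that only a uniform positive lower bound on the polarization probability is needed. Your one-step cap lemma is the same content as \Cref{lem:goodev1} combined with \Cref{lem:cont} (your AM--GM inequality $\|P_{\mathbb{S}^{d-1}}(A)-P_{\mathbb{S}^{d-1}}(B)\|^2\leq\|A-B\|^2/(\|A\|\|B\|)$ is a clean substitute for the paper's rescaling argument), and your $O(\rho_t)$ bound on the sign-disagreement lune is \Cref{lem:bad1}. Where you genuinely diverge is the endgame: the paper runs a craps-principle argument (good event before bad event with probability $\beta/(\beta+O(\rho))$) iterated via the strong Markov property into an infinite product $\prod_k(1-O((1+\epsilon)^{-k}))>0$, whereas you derive the one-step drift $\mathbb{E}[\rho_{t+1}\mathbf{1}_{\{\tau>t+1\}}\mid\mathcal{F}_t]\leq(1-p_0\gamma_0)\rho_t\mathbf{1}_{\{\tau>t\}}$, sum it to get $\Pr(\tau<\infty)\lesssim\rho_0$, conclude via Fatou on $\{\tau=\infty\}$, and handle large $\rho_0$ with a burn-in of $T$ consecutive cap events of probability $p_0^T$. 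Both routes work; the first-moment/stopping-time version arguably gives a cleaner quantitative handle on $\Pr(\tau=\infty)$ as an explicit function of $\rho_0$, at the price of the extra burn-in phase that the paper's craps-principle bound avoids (its estimate is nonvacuous for all $\rho_0\leq 2$). Two small points to make explicit in a final write-up: the acute angle is preserved along no-split steps because $\rho_t$ is nonincreasing there (this is what licenses reusing the cap estimate and the lune bound at every $t$), and the lune estimate should quote the upper equivalence constant $C'$ rather than $C$.
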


\begin{remark}
\label{rmk:consensus}
It can be shown that for any sign-invariant dynamics that strongly polarizes, if $\mathbf{X}_0$ is drawn uniformly from $(\mathbb{S}^{d-1})^n$, then each possible clustering is equally likely even conditioned on the sequence $\{\xi_t\}_{t=0}^{\infty}$ almost surely. In particular, the probability over starting configurations and the random updates of consensus is exponentially small in $n$.
\end{remark}

To set up this result, we establish a sequence of lemmas that will prove useful. We begin with a simple geometric fact:
\begin{lemma}
\label{lem:cont}
Let $\mathbf{x},\mathbf{y}\in \mathbb{S}^{d-1}$ and suppose $\mathbf{z}\in \mathbb{R}^d$ is such that $\|\mathbf{x}+\mathbf{z}\|_2,\|\mathbf{y}+\mathbf{z}\|_2\geq 1+\epsilon$ for some $\epsilon\geq 0$. Then $\|P_{\mathbb{S}^{d-1}}(\mathbf{x}+\mathbf{z})-P_{\mathbb{S}^{d-1}}(\mathbf{y}+\mathbf{z})\|_2\leq \frac{\|\mathbf{x}-\mathbf{y}\|_2}{1+\epsilon}.$
\end{lemma}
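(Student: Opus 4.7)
The plan is to reduce the inequality to a clean algebraic identity involving only the norms of $\mathbf{u} := \mathbf{x}+\mathbf{z}$ and $\mathbf{v} := \mathbf{y}+\mathbf{z}$ and their inner product, and then to invoke the hypothesis that both norms are at least $1+\epsilon$. The key observation is that $\mathbf{u}-\mathbf{v} = \mathbf{x}-\mathbf{y}$, so the quantity on the right-hand side can be expressed purely in terms of $\mathbf{u}$ and $\mathbf{v}$, which matches what appears on the left after projection.

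Concretely, set $a = \|\mathbf{u}\|_2$ and $b = \|\mathbf{v}\|_2$. First, I would expand
\[
\|\mathbf{u}/a - \mathbf{v}/b\|_2^2 = 2 - \frac{2\langle \mathbf{u},\mathbf{v}\rangle}{ab},
\]
and then use the polarization identity $2\langle \mathbf{u},\mathbf{v}\rangle = a^2 + b^2 - \|\mathbf{u}-\mathbf{v}\|_2^2$ to substitute. After combining terms over the common denominator $ab$, the numerator becomes $\|\mathbf{u}-\mathbf{v}\|_2^2 - (a-b)^2$, yielding the identity
\[
\|P_{\mathbb{S}^{d-1}}(\mathbf{u}) - P_{\mathbb{S}^{d-1}}(\mathbf{v})\|_2^2 \;=\; \frac{\|\mathbf{u}-\mathbf{v}\|_2^2 - (a-b)^2}{ab}.
\]

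From here the inequality is immediate: dropping the nonnegative term $(a-b)^2$ in the numerator only increases the fraction, and since $a,b \geq 1+\epsilon$ by hypothesis, $ab \geq (1+\epsilon)^2$ in the denominator only decreases it (or keeps it the same). Together with $\mathbf{u}-\mathbf{v} = \mathbf{x}-\mathbf{y}$, this gives
\[
\|P_{\mathbb{S}^{d-1}}(\mathbf{x}+\mathbf{z}) - P_{\mathbb{S}^{d-1}}(\mathbf{y}+\mathbf{z})\|_2^2 \;\leq\; \frac{\|\mathbf{x}-\mathbf{y}\|_2^2}{(1+\epsilon)^2},
\]
and taking square roots finishes the proof.

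I do not expect any real obstacle in this argument: it is a short algebraic manipulation, and the only slight care needed is to observe that $(a-b)^2 \geq 0$ so that dropping it is valid, and that the norm bound applies to both $a$ and $b$ so we may replace $ab$ by the worst case $(1+\epsilon)^2$. The $\epsilon = 0$ boundary case is handled uniformly by the same computation, since the only thing required is $a,b > 0$ to make the projections well-defined (which is built into the hypothesis via $a,b \geq 1$).
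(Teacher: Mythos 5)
Your proof is correct: the identity
\[
\Bigl\|\tfrac{\mathbf{u}}{a}-\tfrac{\mathbf{v}}{b}\Bigr\|_2^2=\frac{\|\mathbf{u}-\mathbf{v}\|_2^2-(a-b)^2}{ab}
\]
checks out by direct expansion, the numerator manipulation via the polarization identity is right, and dropping $(a-b)^2\geq 0$ together with $ab\geq (1+\epsilon)^2$ and $\mathbf{u}-\mathbf{v}=\mathbf{x}-\mathbf{y}$ gives exactly the claimed bound. However, your route differs from the paper's. The paper argues geometrically: it views $\mathbf{x}+\mathbf{z}$, $\mathbf{y}+\mathbf{z}$, $\mathbf{x}-\mathbf{y}$ as a triangle with a vertex at the origin, rescales the whole triangle by $r=(1+\epsilon)^{-1}$ so that both long sides still have norm at least $1$ while the third side shrinks to $r\|\mathbf{x}-\mathbf{y}\|_2$, and then invokes the fact that radial projection onto $\mathbb{S}^{d-1}$ is a contraction on vectors of norm at least $1$ (equivalently, projection onto the closed unit ball). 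Your argument instead produces an exact closed-form expression for the distance between the normalized vectors, from which the inequality drops out; this is more self-contained, since it does not require separately justifying the contraction property of the projection, and it is slightly sharper, exhibiting the unused slack terms $ab$ versus $(1+\epsilon)^2$ and the correction $(a-b)^2$. The paper's argument, in exchange, is shorter once the contraction fact is taken as known and makes the geometric mechanism (rescale, then project) transparent. Either proof is acceptable; yours stands on its own without citing any auxiliary fact.
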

\begin{proof}
We may assume $\mathbf{x}\neq \mathbf{y}$, as otherwise the claim is trivial. Consider the arrangement of vectors $\mathbf{x}+\mathbf{z},\mathbf{y}+\mathbf{z},\mathbf{x}-\mathbf{y}$ in the plane spanned by $\{\mathbf{x}+\mathbf{z},\mathbf{y}+\mathbf{z}\}$ forming a triangle with a vertex at the origin and adjacent sides $\mathbf{x}+\mathbf{z},\mathbf{y}+\mathbf{z}$. By the assumption that these vectors have length at least $1+\epsilon$, scaling this triangle by a factor of $r=(1+\epsilon)^{-1}$ ensures that $r(\mathbf{x}+\mathbf{z})$ and $r(\mathbf{y}+\mathbf{z})$ continue to have at least unit norm, and the distance between them is at most $r(\|\mathbf{x}-\mathbf{y}\|_2)$. As projection onto $\mathbb{S}^{d-1}$ is a contraction in Euclidean distance for vectors of length at least $1$, it follows that
\begin{equation*}
    \|P_{\mathbb{S}^{d-1}}(\mathbf{x}+\mathbf{z})-P_{\mathbb{S}^{d-1}}(\mathbf{y}+\mathbf{z})\|_2=\|P_{\mathbb{S}^{d-1}}(r(\mathbf{x}+\mathbf{z}))-P_{\mathbb{S}^{d-1}}(r(\mathbf{y}+\mathbf{z}))\|_2\leq r\|\mathbf{x}-\mathbf{y}\|_2.
\end{equation*}
\end{proof}

Next, we show that with some constant probability, a random vector drawn from $\mathcal{D}$ will not split two vectors that form an acute angle, and that the probability of splitting at all tends to zero as the distance tends to zero.
\begin{lemma}
\label{lem:goodev1}
There exists constants $\beta,\gamma>0$ depending on $d$ and the measure of equivalence such that the following holds: suppose that $\mathbf{X}_0=(\mathbf{X}_0^{(1)},\mathbf{X}_0^{(2)})$ satisfies $\langle \mathbf{X}_0^{(1)},\mathbf{X}_0^{(2)}\rangle\geq 0$. Then the probability that $\xi_0$ satisfies:
\begin{enumerate}
    \item $\text{sgn}(\langle \mathbf{X}_0^{(1)},\xi_0\rangle) = \text{sgn}(\langle \mathbf{X}_0^{(2)},\xi_0\rangle)$, and
    
    \item $\vert \langle \mathbf{X}_0^{(i)},\xi_0\rangle\vert\geq \gamma$ for $i=1,2$,
\end{enumerate}
is at least $\beta$.
\end{lemma}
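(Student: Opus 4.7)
The plan is to exhibit a concrete spherical cap around the ``bisector'' of $\mathbf{X}_0^{(1)}$ and $\mathbf{X}_0^{(2)}$ in which any $\xi_0$ simultaneously satisfies both conditions, and then to lower-bound the $\mathcal{D}$-probability of this cap using the assumed equivalence with the Haar measure $\mu$. The underlying geometric observation is that whenever two unit vectors are non-obtuse, their normalized sum is a unit vector having a large, uniformly positive inner product with each of them, so a small neighborhood around it inherits this property by continuity.

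First I would set
\[
\mathbf{w} := \frac{\mathbf{X}_0^{(1)} + \mathbf{X}_0^{(2)}}{\|\mathbf{X}_0^{(1)} + \mathbf{X}_0^{(2)}\|_2},
\]
which is well-defined since $\|\mathbf{X}_0^{(1)} + \mathbf{X}_0^{(2)}\|_2 = \sqrt{2(1+c)} \geq \sqrt{2}$, where $c := \langle \mathbf{X}_0^{(1)}, \mathbf{X}_0^{(2)}\rangle \in [0,1]$. A direct calculation then yields $\langle \mathbf{w}, \mathbf{X}_0^{(i)}\rangle = \sqrt{(1+c)/2} \geq 1/\sqrt{2}$ for $i = 1,2$, uniformly over the starting configuration.

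Next I would fix a small constant $\delta > 0$, to be chosen, and consider the cap $C_\delta := \{\xi \in \mathbb{S}^{d-1} : \langle \xi, \mathbf{w}\rangle \geq 1-\delta\}$. Any $\xi \in C_\delta$ admits an orthogonal decomposition $\xi = \alpha \mathbf{w} + \mathbf{v}$ with $\mathbf{v} \perp \mathbf{w}$, $\alpha \geq 1-\delta$, and $\|\mathbf{v}\|_2 = \sqrt{1-\alpha^2} \leq \sqrt{2\delta}$. Applying Cauchy--Schwarz and the previous bound then gives
\[
\langle \xi, \mathbf{X}_0^{(i)}\rangle \geq \alpha \cdot \tfrac{1}{\sqrt{2}} - \|\mathbf{v}\|_2 \geq \tfrac{1-\delta}{\sqrt{2}} - \sqrt{2\delta}.
\]
Choosing $\delta$ small enough (e.g., $\delta = 1/16$ already suffices) makes the right-hand side a strictly positive constant $\gamma > 0$ independent of $\mathbf{X}_0$, which simultaneously establishes both required conditions: the inner products are positive (so their signs agree) and are at least $\gamma$ in magnitude.

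Finally, since $\mu$ is rotation-invariant, $\mu(C_\delta)$ depends only on $d$ and $\delta$ and is strictly positive; the equivalence $\Pr(\xi_0 \in A) \geq C\mu(A)$ then gives $\Pr(\xi_0 \in C_\delta) \geq \beta$ for a constant $\beta > 0$ depending only on $d$ and $C$. I do not anticipate any real obstacle here---the lemma is essentially a geometric compactness statement combined with rotational invariance of the Haar measure, and the only minor bookkeeping is verifying that $\delta$ can be chosen to make both the inner-product bound and the measure bound nontrivial, which is automatic since both requirements hold for any sufficiently small but fixed $\delta$.
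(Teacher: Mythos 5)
Your proof is correct, and it takes a genuinely different (and arguably cleaner) route than the paper's. The paper argues probabilistically under Haar measure: it bounds the two conditions separately---sign agreement holds with probability at least $1/2$ because the direction of the projection of $\xi_0$ onto the plane spanned by $\mathbf{X}_0^{(1)},\mathbf{X}_0^{(2)}$ is uniform and the angle between the vectors is acute, and $\vert\langle\mathbf{X}_0^{(i)},\xi_0\rangle\vert\geq\gamma$ holds with probability at least $7/8$ for each $i$ by choosing $\gamma$ small, since the distribution of $\vert\langle\xi_0,\mathbf{z}\rangle\vert$ is independent of $\mathbf{z}$---and then combines them with a union bound to get probability at least $1/4$ under Haar, hence $\beta=C/4$ under the equivalent measure. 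You instead exhibit a single deterministic spherical cap $C_\delta$ around the normalized bisector $\mathbf{w}$ on which \emph{both} conditions hold simultaneously (with the sharper conclusion that both inner products are positive, which is a fine sublower bound for the sign-agreement event), and then only invoke rotational invariance of $\mu$ to bound $\mu(C_\delta)$ from below. Your construction gives explicit constants and avoids the slightly delicate ``uniform direction in the spanned plane'' argument, at the cost of a smaller (but still dimension-dependent constant) value of $\beta$; since the lemma only asks for \emph{some} constants $\beta,\gamma>0$ uniform over starting configurations with nonnegative inner product, this is immaterial, and your verification that $\delta=1/16$ yields $\gamma=(1-\delta)/\sqrt{2}-\sqrt{2\delta}>0$ is accurate.
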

\begin{proof}
We first show this when $\xi_0$ is drawn uniformly from the sphere and then simply change constants when moving to any measure that is equivalent to Haar measure. But this is clear: the probability of $\xi_0$ satisfies the first property is at least $1/2$ under these assumptions, as the direction of $\xi_0$ in the plane spanned by $\mathbf{X}_0^{(1)},\mathbf{X}_0^{(2)}$ is itself uniform and using the acuteness of the two vectors. Moreover, the distribution of $\vert \langle \xi_0,\mathbf{z} \rangle\vert$ does not depend on $\mathbf{z}$ by uniformity, so we may choose $\gamma>0$ small enough so that the probability of the second property is at least $7/8$ for any fixed $\mathbf{z}$. By a union bound, it follows that the probability $\xi_0$ has the desired properties is at least $1/4$ under Haar measure. Under the true, equivalent distribution, the probability is thus at least $\beta:=C/4>0$.
\end{proof}

\begin{lemma}
\label{lem:bad1}
Under the assumptions and notation of \Cref{lem:goodev1}, the probability that $\text{sgn}(\langle \mathbf{X}_0^{(1)},\xi_0\rangle) \neq \text{sgn}(\langle \mathbf{X}_0^{(2)},\xi_0\rangle)$ is at most $O(\|\mathbf{X}_0^{(1)}-\mathbf{X}_0^{(2)}\|_2)$ and $O(\angle(\mathbf{X}_0^{(1)},\mathbf{X}_0^{(2)}))$, where the implicit constant depends only on $d$ and the measure of equivalence.
\end{lemma}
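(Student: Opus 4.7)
The plan is to first reduce to the case where $\xi_0$ is drawn from the Haar measure $\mu$ using the equivalence assumption: by the upper bound $\Pr(\xi_0 \in A) \leq C' \mu(A)$, it suffices to bound the $\mu$-measure of the ``bad set''
\[
B = \{\xi \in \mathbb{S}^{d-1} : \text{sgn}(\langle \mathbf{X}_0^{(1)},\xi\rangle) \neq \text{sgn}(\langle \mathbf{X}_0^{(2)},\xi\rangle)\}.
\]
Geometrically, $B$ is the symmetric difference (on the sphere) of the two open hemispheres with poles $\mathbf{X}_0^{(1)}$ and $\mathbf{X}_0^{(2)}$, i.e.\ a spherical bi-lune whose dihedral angle equals $\theta := \angle(\mathbf{X}_0^{(1)}, \mathbf{X}_0^{(2)}) \in [0,\pi/2]$ (the latter bound coming from the acuteness assumption inherited from \Cref{lem:goodev1}).

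The main step is to compute $\mu(B)$. I would exploit the rotational invariance of the Haar measure: decompose $\xi = \xi_\parallel + \xi_\perp$ into components in the two-dimensional plane $V = \text{span}(\mathbf{X}_0^{(1)},\mathbf{X}_0^{(2)})$ and its orthogonal complement. Conditional on $\|\xi_\parallel\|_2$, the direction of $\xi_\parallel$ within $V$ is uniform on the unit circle of $V$ by invariance under rotations fixing $V^\perp$. Since the signs of $\langle \mathbf{X}_0^{(i)},\xi\rangle = \langle \mathbf{X}_0^{(i)},\xi_\parallel\rangle$ depend only on this angular direction in $V$, the problem reduces to the $d=2$ case on the circle: the bad set there consists of the two arcs of total length $2\theta$, giving probability $\theta/\pi$. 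Thus $\mu(B) = \theta/\pi$, and combined with the equivalence constant we get $\Pr(\xi_0 \in B) \leq C'\theta/\pi = O(\angle(\mathbf{X}_0^{(1)},\mathbf{X}_0^{(2)}))$.

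To convert this into the Euclidean-distance bound, I would use the elementary chord/arc identity $\|\mathbf{X}_0^{(1)} - \mathbf{X}_0^{(2)}\|_2 = 2\sin(\theta/2)$, together with the standard inequality $\sin(x) \geq (2/\pi) x$ for $x \in [0,\pi/2]$. This gives $\theta \leq (\pi/2)\|\mathbf{X}_0^{(1)} - \mathbf{X}_0^{(2)}\|_2$, so the $O(\angle)$ bound immediately implies the $O(\|\mathbf{X}_0^{(1)} - \mathbf{X}_0^{(2)}\|_2)$ bound.

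I do not expect a serious obstacle here; the only mildly delicate point is justifying the dimensional reduction to the circle via rotational invariance of Haar measure rather than computing a spherical integral by hand, but this is standard. The constants absorbed into the $O(\cdot)$ depend only on $d$ (through the Haar normalization that produces the factor $\theta/\pi$) and on $C'$ from the equivalence of $\mathcal{D}$ with $\mu$, matching what the lemma claims.
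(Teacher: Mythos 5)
Your proposal is correct and follows essentially the same route as the paper: reduce to Haar measure via the constant $C'$, observe that the bad event depends only on the (uniformly distributed) direction of $\xi_0$ in the plane spanned by $\mathbf{X}_0^{(1)},\mathbf{X}_0^{(2)}$, bound the resulting lune by the angle, and pass between angle and chord length. You simply spell out the details (rotational invariance, the $\theta/\pi$ lune measure, $2\sin(\theta/2)\geq 2\theta/\pi$) that the paper compresses into ``an analogous argument'' referencing \Cref{lem:goodev1}.
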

\begin{proof}
We first show this for Haar measure. By an analogous argument, the set of vectors with the desired property have directions lie in a band of width $O(\|\mathbf{X}_0^{(1)}-\mathbf{X}_0^{(2)}\|_2)$ in the plane spanned by $\mathbf{X}_0^{(1)},\mathbf{X}_0^{(2)}$, and therefore has probability at most $O(\|\mathbf{X}_0^{(1)}-\mathbf{X}_0^{(2)}\|_2)$ and $O(\angle(\mathbf{X}_0^{(1)},\mathbf{X}_0^{(2)}))$ by uniformity. For the true distribution, this can increase by a factor of at most $C'$, which we may absorb into the implicit constant.
\end{proof}

With this result in hand, we turn to the proof of the main theorem:
\begin{proof}[Proof of \Cref{thm:shjmr}]
We begin with a series of reductions that simplifies the problem. First, because these dynamics are oblivious, we observe that by \Cref{lem:2suff} it suffices to consider the case $n=2$ with an arbitrary starting vector $\mathbf{X}_0=(\mathbf{X}_0^{(1)},\mathbf{X}_0^{(2)})$. Next, by sign-invariance of these dynamics, we may assume that $\langle \mathbf{X}_0^{(1)},\mathbf{X}_0^{(2)}\rangle\geq 0$ by possibly flipping the sign of one of the vectors and noting that both the dynamics and the set $P$ are invariant under these sign changes. Note that this implies that the two starting vectors form an acute angle. Finally, by \Cref{thm:levy}, it suffices to show that the probability that $\|\mathbf{X}_t^{(1)}-\mathbf{X}_t^{(2)}\|_2\to 0$ is bounded below by some nonzero constant $c>0$, uniformly over the choice of starting vector (though assuming nonnegative inner product).

We use the notation of \Cref{lem:goodev1}. Note that the good event of \Cref{lem:goodev1} and the bad event of \Cref{lem:bad1} are disjoint, though not mutually exhaustive. We claim that by the craps principle, the probability of encountering a random update $\xi_t$ satisfying the good event before the bad event is at least $\beta/(\beta+O(\|\mathbf{X}_0^{(1)}-\mathbf{X}_0^{(2)}\|_2))$. Indeed, while an update need not satisfy either event, on the complement of the bad event, $\|\mathbf{X}_t^{(1)}-\mathbf{X}_t^{(2)}\|_2$ is nonincreasing so long as the bad event does not occur as the unnormalized lengths increase by sign-invariance with respect to $\xi_t$ (so that we may assume both signs are positive) and contractions decrease distances. The claim then follows from the crap's principle, \Cref{lem:goodev1}, and \Cref{lem:bad1}.

Moreover, on this event, the distance between $\mathbf{X}_{t+1}^{(1)}$ and $\mathbf{X}_{t+1}^{(2)}$ decreases by a factor of $(1+\epsilon)^{-1}$ where $\epsilon=\Omega(\eta\gamma)$ by \Cref{lem:cont} (using $\mathbf{z}=\pm \eta\cdot \xi_t$) and the fact that the inner products are bounded below on the good event. It follows that on this event, the new distance between vectors is at most $\|\mathbf{X}_0^{(1)}-\mathbf{X}_0^{(2)}\|_2/(1+\epsilon)$. By the strong Markov property, we may iterate this argument to show that the probability of the good event occurring before the bad event is now at least $\beta/(\beta+O(\|\mathbf{X}_0^{(1)}-\mathbf{X}_0^{(2)}\|_2/(1+\epsilon)))\geq 1-O\left(\frac{\|\mathbf{X}_0^{(1)}-\mathbf{X}_0^{(2)}\|_2}{1+\epsilon}\right)$ where we absorb the constant $\beta$. It follows that the probability that the good event occurs infinitely often without the bad event is at least
\begin{equation*}
    \prod_{k=0}^{\infty} \left(1-O\left(\frac{\|\mathbf{X}_0^{(1)}-\mathbf{X}_0^{(2)}\|_2}{(1+\epsilon)^k}\right)\right)\geq \prod_{k=0}^{\infty} \left(1-O\left(\frac{1}{(1+\epsilon)^k}\right)\right),
\end{equation*}
where we simply upper bound $\|\mathbf{X}_0^{(1)}-\mathbf{X}_0^{(2)}\|_2\leq 2$ in the inequality. Note that if this occurs, then \Cref{lem:cont} show that this implies that $\|\mathbf{X}_t^{(1)}-\mathbf{X}_t^{(1)}\|_2\to 0$ as the distance is nonincreasing and geometrically decreases infinitely often. From standard analysis, 
\begin{equation*}
    \prod_{k=0}^{\infty} \left(1-O\left(\frac{1}{(1+\epsilon)^k}\right)\right)>0 \iff \sum_{k=0}^{\infty}O\left(\frac{1}{(1+\epsilon)^k}\right)<\infty,
\end{equation*}
and the latter is clearly true as a geometric series. Moreover, these lower bounds are uniform over the value of $\|\mathbf{X}_0^{(1)}-\mathbf{X}_0^{(2)}\|_2$. By the reductions above, this completes the proof.
\end{proof}

\subsection{Strong Polarization in the Party Model}

We now turn to proving strong polarization in the party model. One complicating factor is that these dynamics are \emph{not} oblivious, unlike the other models where strong polarization is known. Therefore, we have to reason about multiple vectors acting on each other at the same time. 

To set up the formal statement of the result, we need the following definition: for any given set of nonnegative coefficients $\bm{\eta}$, let $A=A(\bm{\eta})$ be the (directed) $n\times n$ adjacency matrix defined by
\begin{equation*}
    A_{ij}=\begin{cases}
1 & \text{if $\eta^{(i)}_j>0$}\\
0 & \text{otherwise}.
    \end{cases}
\end{equation*}
In other words, we consider the directed graph with a directed edge from $i$ to $j$ if $j$ influences $i$. We say that $A$ is irreducible if $A^n>0$ componentwise. This is equivalent to the existence of a directed path between any two agents $i$ and $j$. Our main result of this section is as follows:

\begin{theorem}
\label{thm:party}
Suppose there are $n\geq 1$ agents in the party model where each $\xi_t$ is drawn i.i.d. from a distribution $\mathcal{D}$ that is equivalent to the uniform (Haar) measure $\mu$, i.e. there exists $C,C'>0$ such that for every measurable set $B\subseteq \mathbb{S}^{d-1}$, $C\mu(B)\leq \Pr(\xi_t\in B)\leq C'\mu(B)$. Moreover, suppose that $A=A(\bm{\eta})$ is irreducible. Then this system strongly polarizes from any choice of starting vector $\mathbf{X}_0$.
\end{theorem}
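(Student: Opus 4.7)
The plan is to adapt the template of \Cref{thm:shjmr}. First, by \Cref{thm:levy}, it suffices to show that from every starting configuration strong polarization occurs with some positive probability $c>0$. The principal new difficulty is that the party dynamics are \emph{not} oblivious, so \Cref{lem:2suff} is unavailable: we must track the entire $n$-tuple $\mathbf{X}_t$ simultaneously, and the contraction mechanism has to exploit irreducibility of $A=A(\bm{\eta})$ rather than pure pairwise geometry.

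By sign-invariance, I would first flip signs of the initial opinions so that all $\mathbf{X}_0^{(i)}$ lie in a common closed half-space $H^+=\{\mathbf{x}:\langle \mathbf{x},\mathbf{v}_0\rangle\geq 0\}$ for some reference $\mathbf{v}_0\in \mathbb{S}^{d-1}$. Define the good event at time $t$ to be that $\xi_t$ is concentrated enough around $\mathbf{v}_0$ that $\langle \mathbf{X}_t^{(i)},\xi_t\rangle\geq \gamma$ for every $i\in [n]$. On this event each agent sees every other agent as on the same side of $\xi_t$, so the party update collapses to the pure network averaging
\[
\mathbf{X}_{t+1}^{(i)} \;\propto\; \mathbf{X}_t^{(i)} + \sum_{j} \eta_j^{(i)}\,\mathbf{X}_t^{(j)}.
\]
Up to row scaling by $M_i=1+\sum_j \eta_j^{(i)}$ this is a row-stochastic linear map $\tilde{A}$ whose nonzero pattern is $A$ with added self-loops. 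Since $A$ is irreducible, $\tilde{A}^n>0$ componentwise, so the Dobrushin/Birkhoff coefficient of $\tilde{A}^n$ is strictly less than $1$. Hence $n$ consecutive good-event updates strictly contract the pre-projection pairwise distances by a factor $c<1$ depending only on $\bm{\eta}$; combining this with \Cref{lem:cont}, together with the easy check that the unnormalized vectors on a good-event window have norm at least $1$ (using that all summands lie in $H^+$), transfers the contraction to the actual distances on $\mathbb{S}^{d-1}$.

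The probabilistic control then mirrors \Cref{lem:goodev1,lem:bad1}: when the values $\langle \mathbf{X}_t^{(i)},\mathbf{v}_0\rangle$ are bounded away from $0$, a full spherical cap of directions around $\mathbf{v}_0$ simultaneously works for every agent, so the good event has probability at least some $\beta>0$; on the other hand, the probability that $\xi_t$ splits the signs of any pair $\mathbf{X}_t^{(i)},\mathbf{X}_t^{(j)}$ is $O(\|\mathbf{X}_t^{(i)}-\mathbf{X}_t^{(j)}\|_2)$ by a band argument as in \Cref{lem:bad1}, and hence at most $O(D_t)$ where $D_t$ is the configuration diameter. Grouping time into windows of length $n$ and applying the craps principle together with the strong Markov property---exactly as in the proof of \Cref{thm:shjmr}, but now with per-window contraction factor $c$ and window-level failure probability $O(D_t)$---yields a positive-probability event on which all windows are good and the diameter decays geometrically, forcing $\rho(\mathbf{X}_t,P)\to 0$.

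The main obstacle I expect is the contraction step in this non-oblivious regime: the interplay between the linear averaging and the coordinatewise normalization prevents a single pairwise estimate as in \Cref{thm:shjmr}, and one must combine a Dobrushin-type bound with \Cref{lem:cont} uniformly over an $n$-step window while simultaneously ruling out sign splits. A secondary technical nuisance is that, before the craps iteration starts, some $\mathbf{X}_0^{(i)}$ may lie arbitrarily close to the boundary of $H^+$, where no single good cone around $\mathbf{v}_0$ has nontrivial probability; I would handle this with a short priming phase of good events which, by the same averaging identity, pushes every opinion into a uniformly interior cone before the main iterative argument begins.
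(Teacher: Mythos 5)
Your skeleton matches the paper's plan (reduce via \Cref{thm:levy} to a uniform positive-probability bound, use sign-invariance to normalize the configuration, observe that a non-splitting $\xi_t$ turns the party update into pure row-stochastic averaging, use irreducibility over windows of length $n$, bound the split probability by the configuration diameter, and run a product/craps argument), but two steps have genuine gaps. First, the normalization ``flip signs so that all $\mathbf{X}_0^{(i)}$ lie in a common closed half-space'' is too weak. Take $d=3$ and $\mathbf{x}_1=(1,0,0)$, $\mathbf{x}_2=(-\tfrac12,\tfrac{\sqrt3}{2},0)$, $\mathbf{x}_3=(-\tfrac12,-\tfrac{\sqrt3}{2},0)$: these lie in the closed half-space $\{x_3\geq 0\}$ but sum to zero, so for every $\xi$ the three inner products sum to zero and can never all have the same strict sign. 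For such a signing your good event---and even the bare no-split event---has probability zero, so neither your main iteration nor your ``priming phase'' can begin. Even away from such configurations, the good-event probability is not bounded below uniformly in $\mathbf{X}_0$ when opinions sit near the boundary, and \Cref{thm:levy} needs a bound $c$ that is uniform over configurations (it is applied pointwise to the random state $\mathbf{X}_T$). The missing ingredient is \Cref{lem:split}: by a union bound under Haar measure there is always a direction $\mathbf{z}$ with $\vert\langle\mathbf{z},\mathbf{x}_i\rangle\vert\geq\lambda(n,d)$ for all $i$, and signing relative to \emph{that} direction gives a configuration-independent margin, which both removes the need for a priming phase and restores the uniformity that \Cref{thm:levy} requires.

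Second, the contraction transfer is where your argument would actually fail. (i) Because each agent renormalizes after every step, the $n$-step composition is not $\tilde{A}^n$: the effective weights at step $k+1$ are $\eta_j^{(i)}$ multiplied by the reciprocal norms of the step-$k$ unnormalized vectors, so you must show the composed row-stochastic matrices still have entries bounded below uniformly (this again uses the margin; this is exactly what \Cref{lem:stochmat} does, yielding $M=\epsilon\Pi+(1-\epsilon)Q$ with $\epsilon=\epsilon(n,d,\bm{\eta})$). (ii) Your bridge back to the sphere is incorrect: it is not true that the unnormalized vectors have norm at least $1$ just because all summands lie in $H^+$. For unit vectors $\mathbf{x}_1=(\cos\theta,\sin\theta)$, $\mathbf{x}_2=(\cos\theta,-\sin\theta)$ with $\theta$ close to $\pi/2$ (both in the half-space $\{x_1\geq 0\}$), one has $\|\mathbf{x}_1+\eta\,\mathbf{x}_2\|_2^2=1+\eta^2+2\eta\cos 2\theta\approx(1-\eta)^2<1$, so \Cref{lem:cont} does not apply and the projection can expand distances; hence the Dobrushin contraction of the pre-projection diameter does not transfer to the unit-sphere diameter with a uniform factor below $1$. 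The paper sidesteps this entirely by working with the scale-invariant potential $\Phi(\mathbf{x}_1,\ldots,\mathbf{x}_n)=\min_{\mathbf{v}\in\mathbb{S}^{d-1}}\max_i\angle(\mathbf{v},\mathbf{x}_i)$ and proving geometric decay of $\tan\Phi$ via \Cref{lem:conic,lem:contract,lem:stochmat,lem:geom}; since angles are unchanged by the per-agent positive rescalings, the normalization issue you flag as the ``main obstacle'' disappears. To complete your route you would need to replace the distance-based Dobrushin step with such a scale-invariant (angle-based) contraction estimate; the probabilistic window argument you describe then goes through essentially as in the paper.
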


To prove this, we follow a similar high-level plan as that of \Cref{thm:shjmr}. By sign-invariance, we will assume via \Cref{lem:split} that each component of $\mathbf{X}_0$ lies on one side of a hyperplane with margin strictly bounded below by zero regardless of the individual configuration. We then construct a potential function that is equivalent to the maximum angle between any two agents with the property that, assuming the dynamics do not split up the vectors on the next $n$ iterations, it is guaranteed to decay by some factor strictly bounded above by $1$. Since, as we will again see, the probability that a random vector splits up any two components is essentially bounded by the maximum angle between any two components and this quantity is decreasing geometrically, it will follow that the probability of strong polarization is bounded below uniformly. We may then again conclude via \Cref{thm:levy} that the system strongly polarizes from any starting configuration. 

We now carry out this plan. First, we show that there always exists a signing of the starting configuration such that all vectors lie on one side of some hyperplane with nontrivial margin:
\begin{lemma}
\label{lem:split}
For all $d,n\geq 1$, there exists a constant $\lambda=\lambda(n,d)>0$ such that for any $\mathbf{x}_1,\ldots,\mathbf{x}_n\in \mathbb{S}^{d-1}$, there exists $\mathbf{z}\in \mathbb{S}^{d-1}$ such that $\vert \langle \mathbf{z},\mathbf{x}_i\rangle\vert\geq \lambda$ for all $i\in [n]$.
\end{lemma}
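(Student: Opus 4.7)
The plan is a straightforward union-bound argument on $\mathbb{S}^{d-1}$ equipped with the uniform (Haar) probability measure $\mu$. Write $B_i(\lambda) := \{\mathbf{z} \in \mathbb{S}^{d-1} : |\langle \mathbf{z}, \mathbf{x}_i\rangle| < \lambda\}$ for the \textbf{bad band} associated with $\mathbf{x}_i$, i.e.\ the set of $\mathbf{z}$ that are nearly orthogonal to $\mathbf{x}_i$. The goal is to pick $\lambda > 0$, depending only on $n$ and $d$, small enough that $\mu\bigl(\bigcup_{i=1}^n B_i(\lambda)\bigr) < 1$. Any $\mathbf{z}$ in the (nonempty) complement then witnesses the lemma.

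First I would establish the standard bound $\mu(B_i(\lambda)) \leq C(d)\,\lambda$ for some constant $C(d)$ depending only on the ambient dimension. Because $\mu$ is rotation-invariant, I may rotate so that $\mathbf{x}_i = \mathbf{e}_d$, reducing $B_i(\lambda)$ to the equatorial slab $\{\mathbf{z} : |z_d| < \lambda\}$. Parametrizing $\mathbf{z} = \sin(\theta)\mathbf{e}_d + \cos(\theta)\mathbf{u}$ with $\mathbf{u} \in \mathbb{S}^{d-2}$ and using the usual spherical volume element, the measure of this slab equals a $d$-dependent normalization times $\int_{-\arcsin(\lambda)}^{\arcsin(\lambda)} \cos^{d-2}(\theta)\,d\theta \leq 2\arcsin(\lambda) \leq \pi \lambda$, giving the claimed linear-in-$\lambda$ bound. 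Crucially, $C(d)$ depends only on $d$, not on the particular $\mathbf{x}_i$.

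Applying the union bound then gives $\mu\bigl(\bigcup_{i=1}^n B_i(\lambda)\bigr) \leq n\,C(d)\,\lambda$, so choosing $\lambda := \lambda(n,d) := 1/(2nC(d))$ makes this at most $1/2 < 1$. Hence the complement of $\bigcup_i B_i(\lambda)$ has positive measure, in particular is nonempty, and any $\mathbf{z}$ in it satisfies $|\langle \mathbf{z},\mathbf{x}_i\rangle| \geq \lambda$ simultaneously for every $i \in [n]$, as required.

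I do not anticipate any real obstacle: the only nontrivial ingredient is the estimate on the measure of a thin equatorial band, which is a routine computation in spherical coordinates, and after that the union bound finishes the proof immediately. A purely non-constructive alternative would note that each hyperplane $\{\langle \mathbf{z},\mathbf{x}_i\rangle = 0\}$ has $\mu$-measure zero and appeal to compactness to extract a uniform margin, but this would not yield a bound depending only on $n$ and $d$, which is why I prefer the quantitative union-bound route.
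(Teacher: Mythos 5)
Your proof is correct and follows essentially the same route as the paper: both pick $\lambda$ so that the Haar measure of each ``bad'' band is small enough for a union bound to leave a positive-measure set of good vectors $\mathbf{z}$. The only difference is that you make the band-measure estimate $\mu(B_i(\lambda))\leq C(d)\lambda$ explicit via spherical coordinates, whereas the paper invokes it implicitly; this is a harmless (indeed slightly more quantitative) elaboration, not a different argument.
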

\begin{proof}
Simply choose $\lambda>0$ such that the probability of a random unit vector drawn from Haar measure does not satisfy the condition for a given $i$ is at most $1/(n+1)$ and then apply a union bound to conclude there exists such a vector. Note that this choice of $\lambda$ indeed depends on $d,n$, but not on the choice of vectors as the distribution of the inner product of a uniformly random vector on the sphere with a fixed vector does not depend on the identity of this fixed vector.
\end{proof}

Next, we proceed with several purely geometric results that will enable us to show contraction of a suitable potential function.
\begin{lemma}
\label{lem:conic}
Let $\mathbf{x}_1,\ldots,\mathbf{x}_n\in \mathbb{S}^{d-1}$ all lie strictly on one side of a hyperplane in $\mathbb{R}^d$. Define $\mathbf{y}$ by 
\begin{equation*}
    \mathbf{y}\in \arg\min\limits_{\mathbf{v}\in \mathbb{S}^{d-1}} \max_{j\in [n]} \angle (\mathbf{v},\mathbf{x}_j).
\end{equation*}
Note $\mathbf{y}$ exists as the objective function is continuous and the constraint set is compact. Let $H=\mathbf{y}^{\perp}$ be the orthogonal subspace to $\mathbf{y}$ and let $P_H$ be the orthogonal projection onto $H$. Then $\mathbf{0}\in \mathrm{conv}(P_H(\mathbf{x}_1),\ldots,P_H(\mathbf{x}_n))$.
\end{lemma}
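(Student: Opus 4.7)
The plan is to prove the contrapositive: assuming $\mathbf{0}\notin \mathrm{conv}(P_H(\mathbf{x}_1),\ldots,P_H(\mathbf{x}_n))$, I will construct a small perturbation of $\mathbf{y}$ on the sphere that strictly decreases every angle $\angle(\cdot,\mathbf{x}_j)$ simultaneously, contradicting the minimax optimality of $\mathbf{y}$. The direction of perturbation will come from a separating hyperplane for the projected vectors inside $H$.

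First I would establish that at the optimum, $\langle \mathbf{y},\mathbf{x}_j\rangle > 0$ for every $j$. This uses the hyperplane assumption: because the $\mathbf{x}_j$'s lie strictly on one side of some hyperplane, there is a unit vector $\mathbf{z}$ with $\langle \mathbf{z},\mathbf{x}_j\rangle > 0$ for all $j$, hence $\max_j \angle(\mathbf{z},\mathbf{x}_j)<\pi/2$. By optimality $\max_j \angle(\mathbf{y},\mathbf{x}_j)\leq \max_j \angle(\mathbf{z},\mathbf{x}_j)<\pi/2$, giving $\langle \mathbf{y},\mathbf{x}_j\rangle >0$ for every $j$. In particular, minimizing the angle with each $\mathbf{x}_j$ is equivalent to maximizing the (positive) inner product.

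Next I would invoke the separating hyperplane theorem inside $H$: if $\mathbf{0}\notin \mathrm{conv}(P_H(\mathbf{x}_1),\ldots,P_H(\mathbf{x}_n))$, there is a unit vector $\mathbf{u}\in H$ and $\delta>0$ with $\langle \mathbf{u},P_H(\mathbf{x}_j)\rangle\geq \delta$ for all $j$. Since $\mathbf{u}\perp \mathbf{y}$, we have $\langle \mathbf{u},\mathbf{x}_j\rangle = \langle \mathbf{u},P_H(\mathbf{x}_j)\rangle\geq \delta$. Consider the one-parameter family of unit vectors
\begin{equation*}
\mathbf{y}(\epsilon) = \cos(\epsilon)\,\mathbf{y} + \sin(\epsilon)\,\mathbf{u}.
\end{equation*}
Then
\begin{equation*}
\langle \mathbf{y}(\epsilon),\mathbf{x}_j\rangle = \cos(\epsilon)\langle \mathbf{y},\mathbf{x}_j\rangle + \sin(\epsilon)\langle \mathbf{u},\mathbf{x}_j\rangle \geq \langle \mathbf{y},\mathbf{x}_j\rangle + \epsilon\,\delta - O(\epsilon^2),
\end{equation*}
where the constant in $O(\epsilon^2)$ can be taken uniform in $j$ using boundedness of $|\langle \mathbf{y},\mathbf{x}_j\rangle|\leq 1$ and $|\langle \mathbf{u},\mathbf{x}_j\rangle|\leq 1$ together with Taylor expansions of $\cos$ and $\sin$.

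For sufficiently small $\epsilon>0$ this shows $\langle \mathbf{y}(\epsilon),\mathbf{x}_j\rangle > \langle \mathbf{y},\mathbf{x}_j\rangle$ for \emph{every} $j\in[n]$; combined with the positivity of the inner products at $\mathbf{y}$, this gives $\angle(\mathbf{y}(\epsilon),\mathbf{x}_j) < \angle(\mathbf{y},\mathbf{x}_j)$ for all $j$, so $\max_j \angle(\mathbf{y}(\epsilon),\mathbf{x}_j) < \max_j \angle(\mathbf{y},\mathbf{x}_j)$, contradicting the minimax definition of $\mathbf{y}$. The main (and only real) subtlety is ensuring that a single $\epsilon>0$ simultaneously improves all $n$ inner products, which is where finiteness of $n$ together with the uniform lower bound $\delta$ from the separation is crucial; beyond this the argument is a clean first-order optimality computation on the sphere.
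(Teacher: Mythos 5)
Your proof is correct, but it takes a somewhat different route from the paper's. The paper first establishes the stronger intermediate fact that $\mathbf{y}\in\mathrm{cone}(\mathbf{x}_1,\ldots,\mathbf{x}_n)$, from which the claim follows by applying $P_H$ and renormalizing the coefficients; that fact is proved by separating $\mathbf{y}$ from the cone in $\mathbb{R}^d$ (giving $\mathbf{u}$ with $\langle \mathbf{u},\mathbf{x}_i\rangle>0$ for all $i$ and $\langle\mathbf{u},\mathbf{y}\rangle<0$) and then \emph{reflecting} $\mathbf{y}$ across $\mathbf{u}^{\perp}$, which strictly increases every inner product at once and contradicts the equivalent max--min inner product formulation of the optimizer. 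You instead separate $\mathbf{0}$ from $\mathrm{conv}(P_H(\mathbf{x}_1),\ldots,P_H(\mathbf{x}_n))$ inside $H$ and contradict optimality with a first-order rotation $\mathbf{y}(\epsilon)=\cos(\epsilon)\mathbf{y}+\sin(\epsilon)\mathbf{u}$; the uniform margin $\delta>0$ (available since the hull of finitely many points is compact, so the separation is strict) together with the uniform $O(\epsilon^2)$ error makes a single small $\epsilon$ work for all $n$ constraints. Both arguments are separation-plus-optimality; yours is a direct variational argument for the projected statement and needs the quantitative Taylor step, while the paper's reflection gives a strict improvement with no limiting argument and yields the slightly stronger conic-hull conclusion as a byproduct. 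Two small remarks: your preliminary step showing $\langle\mathbf{y},\mathbf{x}_j\rangle>0$ is not actually needed, since $\arccos$ is strictly decreasing on all of $[-1,1]$, so a larger inner product already means a strictly smaller angle; consequently your argument does not really use the one-sided hyperplane hypothesis for this lemma (it is used elsewhere in the paper, e.g.\ to control $\Phi$ away from $\pi/2$ in \Cref{lem:geom}). Neither point is an error---the proof stands as written.
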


\begin{proof}
Observe that the desired claim is implied by $\mathbf{y}\in \mathrm{cone}(\mathbf{x}_1,\ldots,\mathbf{x}_n)$. To see this, suppose that this holds: suppose there exists $\alpha_i\geq 0$ such that $\mathbf{y}=\sum_{i=1}^n \alpha_i \mathbf{x}_i$. Note that not all $\alpha_i$ are zero as $\mathbf{y}$ has unit norm by construction. Then applying $P_H$ to both sides, we deduce that 
\begin{equation*}
    \mathbf{0}=P_H(\mathbf{y})=\sum_{i=1}^n \alpha_i P_H(\mathbf{x}_i).
\end{equation*}
As $\sum_{i=1}^n \alpha_i>0$, we may renormalize these conefficients so that their sum is one to deduce that $\mathbf{0}\in \mathrm{conv}(P_H(\mathbf{x}_1),\ldots,P_H(\mathbf{x}_n))$. Moreover, note that $\mathbf{y}$ can equivalently be defined via $\mathbf{y}\in \arg\max\limits_{\mathbf{v}\in \mathbb{S}^{d-1}} \min_{j\in [n]} \langle \mathbf{y},\mathbf{x}_j\rangle$. This holds as the inner product is a monotonically decreasing function of angle on $[0,\pi]$ and so the optimization problems are equivalent. 

Therefore, suppose that $\mathbf{y}\not\in \mathrm{cone}(\mathbf{x}_1,\ldots,\mathbf{x}_n)$. By the separating hyperplane theorem, there exists a unit vector $\mathbf{u}$ such that $\langle \mathbf{u}, \mathbf{x}_i\rangle>0$ for all $i\in [n]$ but $\langle \mathbf{u},\mathbf{y}\rangle<0$. Write $\mathbb{R}^d$ in an orthogonal basis that includes $\mathbf{u}$. Then the coordinate with respect to $\mathbf{u}$ for each of the $\mathbf{x}_i$ is strictly positive, while the coordinate for $\mathbf{y}$ is strictly negative. Therefore, by reflecting $\mathbf{y}$ about $\mathbf{u}$, we obtain unit $\tilde{\mathbf{y}}$ such that $\langle \tilde{\mathbf{y}},\mathbf{x}_i\rangle>\langle \mathbf{y},\mathbf{x}_i\rangle $ for all $i\in [n]$, contradicting the optimality of $\mathbf{y}$. 
\end{proof}

\begin{lemma}
\label{lem:contract}
Let $\mathbf{z}_1,\ldots,\mathbf{z}_n$ satisfy $\|\mathbf{z}_i\|\leq 1$ for all $i\in [n]$ and $\mathbf{0}\in \mathrm{conv}(\mathbf{z}_1,\ldots,\mathbf{z}_n)$. Then \begin{equation*}
    \left\|\frac{1}{n}\sum_{i=1}^n \mathbf{z}_i\right\|_2\leq 1-\frac{1}{n}.
\end{equation*}
\end{lemma}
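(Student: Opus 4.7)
The plan is to use the convex combination witnessing $\mathbf{0}\in\mathrm{conv}(\mathbf{z}_1,\ldots,\mathbf{z}_n)$ to eliminate one summand from $\sum_i \mathbf{z}_i$ and then apply the triangle inequality. Concretely, write $\mathbf{0}=\sum_{i=1}^n \lambda_i \mathbf{z}_i$ for some nonnegative weights with $\sum_i \lambda_i=1$, and reindex so that $\lambda_1=\max_i \lambda_i$. Pigeonhole forces $\lambda_1\geq 1/n$, and rearranging the defining equation gives
\[
\mathbf{z}_1 \;=\; -\frac{1}{\lambda_1}\sum_{i=2}^n \lambda_i \mathbf{z}_i.
\]

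Substituting this expression back into $\sum_{i=1}^n \mathbf{z}_i$ rewrites the full sum as $\sum_{i=2}^n \bigl(1-\lambda_i/\lambda_1\bigr)\mathbf{z}_i$. The key point is that each coefficient $1-\lambda_i/\lambda_1$ is nonnegative by the maximality of $\lambda_1$, so the triangle inequality combined with $\|\mathbf{z}_i\|_2\leq 1$ yields
\[
\Bigl\|\sum_{i=1}^n \mathbf{z}_i\Bigr\|_2 \;\leq\; \sum_{i=2}^n \left(1-\frac{\lambda_i}{\lambda_1}\right) \;=\; (n-1)-\frac{1-\lambda_1}{\lambda_1} \;=\; n-\frac{1}{\lambda_1}.
\]
Since $\lambda_1\leq 1$, this is at most $n-1$, and dividing by $n$ delivers the claim.

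There is no genuine obstacle here—the whole argument is a one-line substitution followed by the triangle inequality. The only subtlety worth flagging is that one must invoke the \emph{maximality} of $\lambda_1$ rather than the weaker statement $\lambda_1\geq 1/n$, because that maximality is precisely what guarantees the rewritten coefficients are nonnegative, so that no cancellation is lost in the triangle inequality. The inequality is tight when some $\mathbf{z}_i=\mathbf{0}$ and the remaining $n-1$ vectors all equal a common unit vector, which also reassures us that the $1/n$ slack is genuine and not an artifact of the argument.
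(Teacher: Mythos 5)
Your proof is correct, and it takes a genuinely different route from the paper's. The paper argues by reduction: it first handles the case where some $\mathbf{z}_i=\mathbf{0}$ (bounding the average of the remaining $n-1$ vectors of norm at most $1$), and in the general case shows by contradiction---pairing the inequalities $\langle \mathbf{z}_j,\sum_{i\neq j}\mathbf{z}_i\rangle>0$ against the convex combination witnessing $\mathbf{0}$ in the hull---that some $\mathbf{z}_j$ can be replaced by $\mathbf{0}$ without decreasing the norm of the average. You instead eliminate a vector directly: solving the identity $\mathbf{0}=\sum_i\lambda_i\mathbf{z}_i$ for the vector with the largest weight $\lambda_1\geq 1/n$ and substituting gives $\sum_{i=1}^n\mathbf{z}_i=\sum_{i=2}^n\bigl(1-\lambda_i/\lambda_1\bigr)\mathbf{z}_i$, where maximality of $\lambda_1$ (not merely $\lambda_1\geq 1/n$) makes every coefficient nonnegative, so the triangle inequality immediately yields $\bigl\|\sum_i\mathbf{z}_i\bigr\|_2\leq n-1/\lambda_1\leq n-1$. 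Your substitution argument is shorter, avoids the case split and the contradiction, and even produces the slightly sharper intermediate bound $n-1/\lambda_1$; the paper's version is more structural and makes the extremal configuration (one zero vector, the rest a common unit vector) explicit, which is the same tightness example you note at the end. Both proofs are complete; yours is arguably the cleaner one to read.
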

\begin{proof}
We consider two cases:
\begin{enumerate}
    \item First, suppose that $\mathbf{0}\in \{\mathbf{z}_1,\ldots,\mathbf{z}_n\}$; without loss of generality, suppose $\mathbf{z}_1=\mathbf{0}$. Then 
    \begin{equation*}
        \left\|\frac{1}{n}\sum_{i=1}^n \mathbf{z}_i\right\|_2=\left\|\frac{1}{n}\sum_{i=2}^n \mathbf{z}_i\right\|_2
    \end{equation*}
    As $\mathbf{0}$ is trivially in the convex hull no matter the choice of $\mathbf{z}_2,\ldots,\mathbf{z}_n$, the convexity of the Euclidean norm implies that this latter quantity is optimized for $\mathbf{z}_2=\ldots=\mathbf{z}_n=\mathbf{u}$ for some unit vector $\mathbf{u}$. This has norm $1-\frac{1}{n}$ as needed.
    
    \item Now suppose that $\mathbf{0}\not\in \{\mathbf{z}_1,\ldots,\mathbf{z}_n\}$. We claim that there exists some $j\in [n]$ such that setting $\mathbf{z}_j=\mathbf{0}$ does not decrease the desired quantity, thereby reducing to the previous case. 
    
    To see this, suppose that this is false. In particular, for every $j\in [n]$,
    \begin{equation*}
        \left\|\frac{1}{n}\sum_{i=1}^n \mathbf{z}_i\right\|_2>\left\|\frac{1}{n}\sum_{i\neq j} \mathbf{z}_i\right\|_2
 \iff
        \left\langle \mathbf{z}_j,\sum_{i\neq j}\mathbf{z}_i\right\rangle >0.
    \end{equation*}
   This in turn clearly implies that
    \begin{equation}
    \label{eq:nonneg}
        \left\langle \mathbf{z}_j,\sum_{i=1 }^n\mathbf{z}_i\right\rangle >0.
    \end{equation}
    By assumption, we may write $\mathbf{0}=\sum_{i=1}^n \alpha_i \mathbf{z}_i$ for some nonnegative scalars summing to $1$. Multiplying \Cref{eq:nonneg} by $\alpha_j$ for each $j\in [n]$ and summing, we obtain
    \begin{equation*}
        0=\left\langle \mathbf{0},\sum_{i=1 }^n\mathbf{z}_i\right\rangle = \left\langle \sum_{j=1}^n\alpha_j \mathbf{z}_j,\sum_{i=1 }^n\mathbf{z}_i\right\rangle>0,
    \end{equation*}
    a contradiction.
\end{enumerate}
\end{proof}

Next, we show that the irreducibility of $A$ implies that there is geometric decay in the minimal angle in each $n$ steps that the dynamics do not split the vectors. We start with the following crude, but intuitive lemma:

\begin{lemma}
\label{lem:stochmat}
Let $\mathbf{x}_0=(\mathbf{x}_0^{(1)},\ldots,\mathbf{x}_0^{(n)})$ be any set of vectors all lying strictly on one side of a hyperplane with $\lambda=\lambda(n,d)>0$ margin (i.e. there exists unit $\mathbf{v}$ satisfying $\langle \mathbf{v},\mathbf{x}_0^{(i)}\rangle>\lambda$ for all $i\in [n]$), and consider the update rule given by \Cref{eq:party} where the second summand is empty. 

Suppose that we apply this update rule $n$ times to obtain $\mathbf{x}_n$. Then, there exists $\epsilon=\epsilon(n,d,\bm{\eta})>0$ such that $
    \mathbf{x}_n \propto M\mathbf{x}_0
$
for some row-stochastic matrix $M$ depending on $\mathbf{x}_0$ satisfying
\begin{equation}
    M = \epsilon \Pi+(1-\epsilon)Q,
\end{equation}
where every entry of $\Pi$ is equal to $1/n$ and $Q$ is some arbitrary stochastic matrix.
\end{lemma}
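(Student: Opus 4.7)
The plan is to exploit two key simplifications that arise from the hypotheses. First, because the opposite-side summand is vacuous, each step of the party update is a normalized positive combination of the current opinions. Second, this combination is governed by a fixed row-stochastic matrix $\hat{M}$ (depending only on $\bm{\eta}$), not one that varies with $\mathbf{x}_0$ or $t$, so unrolling $n$ iterations and carefully separating normalization from the affine action will give $\mathbf{x}_n \propto M\mathbf{x}_0$ for some row-stochastic $M = M(\mathbf{x}_0)$. Establishing the $\epsilon\Pi$ component will then reduce to producing a uniform positive lower bound on all entries of $M$, which will follow by combining irreducibility of $A$ (forcing $\hat{M}^n > 0$) with the margin $\lambda$ (controlling the nuisance scalars produced by the iterated normalizations).

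Concretely, I would set $w_i := 1 + \sum_k \eta_k^{(i)}$ and define $\hat{M}$ by $\hat{M}_{ij} := (\delta_{ij} + \eta_j^{(i)})/w_i$, which is row-stochastic. Then the unnormalized update for agent $i$ factors as $w_i p_t^{(i)}$, where $p_t^{(i)} := \sum_j \hat{M}_{ij} \mathbf{x}_t^{(j)}$, so after dividing by the norm $w_i q_t^{(i)}$ (with $q_t^{(i)} := \|p_t^{(i)}\|_2$) the row-specific factor $w_i$ cancels, leaving the clean recursion $\mathbf{x}_{t+1}^{(i)} = p_t^{(i)}/q_t^{(i)}$. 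A short calculation using the margin witness $\mathbf{v}$ shows $\langle \mathbf{v}, \mathbf{x}_{t+1}^{(i)} \rangle \geq \lambda/q_t^{(i)} \geq \lambda$, so the margin is preserved along the trajectory and $q_t^{(i)} \in (\lambda, 1]$ for all $t = 0,\ldots,n-1$.

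Iterating the recursion $n$ times and collecting terms yields a path-sum representation
\begin{equation*}
    \mathbf{x}_n^{(i)} \;=\; \sum_{k=1}^n c_{ik}\,\mathbf{x}_0^{(k)}, \qquad c_{ik} \;=\; \sum_{j_1,\ldots,j_{n-1}} \frac{\hat{M}_{i,j_1}\hat{M}_{j_1,j_2}\cdots \hat{M}_{j_{n-1},k}}{q_{n-1}^{(i)}\,q_{n-2}^{(j_1)}\,\cdots\, q_0^{(j_{n-1})}}.
\end{equation*}
Since each $q_t^{(j)} \leq 1$, I can lower bound $c_{ik} \geq [\hat{M}^n]_{ik}$. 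Because $\hat{M}$ has strictly positive diagonal (from the $I$ contribution) with off-diagonal support equal to $A$, irreducibility of $A$ implies $\hat{M}^n > 0$ entrywise, with a uniform lower bound $\mu = \mu(n,\bm{\eta}) > 0$ depending only on the nonzero entries of $\hat{M}$. Symmetrically, $q_t^{(j)} > \lambda$ gives the row-sum bound $S_i := \sum_k c_{ik} \leq \lambda^{-n}\sum_k [\hat{M}^n]_{ik} = \lambda^{-n}$.

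Normalizing by $M_{ik} := c_{ik}/S_i$ then produces a row-stochastic $M = M(\mathbf{x}_0)$ with $\mathbf{x}_n \propto M\mathbf{x}_0$ and $M_{ik} \geq \mu\lambda^n$ uniformly in $\mathbf{x}_0$. Taking $\epsilon := n\mu\lambda^n$ (which lies in $(0,1]$ since $M$ is stochastic), the matrix $M - \epsilon\Pi$ has nonnegative entries and row sums $1-\epsilon$, so $Q := (M-\epsilon\Pi)/(1-\epsilon)$ is row-stochastic and $M = \epsilon\Pi + (1-\epsilon)Q$ as required. The main subtlety I expect to navigate is the bookkeeping in the unrolling step: the matrix $\hat{M}$ must be the same at every step even though the $q_t^{(i)}$ are configuration-dependent, and it is crucial that the latter remain in the interval $(\lambda,1]$ uniformly so that the final constant $\epsilon$ is independent of $\mathbf{x}_0$.
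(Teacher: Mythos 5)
Your proof is correct and takes essentially the same route as the paper's: both represent the $n$-step update as an $\mathbf{x}_0$-dependent row-stochastic matrix and lower-bound its entries using irreducibility of $A$ together with the positive diagonal coming from the self-term, the paper via induction on steps with the reachability sets $\Gamma^k(i)$, you via a direct path-sum unrolling of $\hat{M}$. If anything, your explicit tracking of the normalization factors $q_t^{(i)}\in(\lambda,1]$ makes the role of the margin more transparent than in the paper's composition step, where the per-row normalization constants in $\mathbf{x}_{k+1}\propto (M_1'M_k)\mathbf{x}_0$ are handled implicitly.
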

\begin{remark}
In the above matrix equation, we interpret $\mathbf{x}_t$ as a $n\times d$ matrix where the $i$th row is $\mathbf{x}_t^{(i)}$. Moreover, the point of the lemma is that $\epsilon$ depends on $n,d,\bm{\eta}$, but \emph{not} on the starting configuration.
\end{remark}
\begin{proof}
We show the following claim by induction: for each $1\leq k\leq n$, there exists $\epsilon_k=\epsilon(n,d,\bm{\eta},k)$ such that $\mathbf{x}_k\propto M_k \mathbf{x}_0$ where $M_k$ is a row-stochastic matrix such that $(M_k)_{ij}>\epsilon_k$ if $j\in \Gamma^k(i)$, where $\Gamma^k(i)$ is the set of nodes reachable from $i$ in $k$ steps in the directed graph induced by $A=A(\bm{\eta})$ above. This clearly implies the claim by setting $\epsilon=\epsilon_n$ noting that $\Gamma^k(i)\subseteq \Gamma^{k+1}(i)$ by the fact that this matrix has ones on the diagonal.

For the base case $k=1$, by definition (and absorbing the term $\mathbf{x}_0^{(i)}$ into the $\eta_i^{(i)}$ multiplier): $
    \mathbf{x}_1^{(i)}\propto \sum_{j=1}^n \eta_j^{(i)}\mathbf{x}_0^{(j)}.$
By dividing by $\eta^{(i)}\triangleq \sum_{j=1}^n \eta_j^{(i)}$, we clearly obtain the claim with $\epsilon_1 \triangleq \min_{i\in [n]}\min\limits_{j\in [n]:\eta_j^{(i)}>0} \eta_j^{(i)}/\eta^{(i)}$. Note that this is independent of $\mathbf{x}_0$.

Now suppose it holds for some $k\geq 1$ so that $\mathbf{x}_k \propto M_k\mathbf{x}_0$.
By applying the base case to $\mathbf{x}_{k+1}$ (noting that $\mathbf{x}_k$ still lies on the same side of the hyperplane by convexity with same margin),
\begin{equation*}
    \mathbf{x}_{k+1} \propto M_1'\mathbf{x}_k\propto (M_1'M_k)\mathbf{x}_0.
\end{equation*}
By the induction hypothesis, $(M_1')_{ij}\geq \epsilon(n,d,\bm{\eta},1)$ for all $j\in \Gamma(i)$, while $(M_k)_{ij}\geq \epsilon(n,d,\bm{\eta},k)$ for all $j\in \Gamma^k(i)$. For any $j\in \Gamma^{k+1}(i)$, there exists some $j'\in [n]$ such that $j'\in \Gamma^k(i)$ and $j\in \Gamma(j')$. From the definition of matrix multiplication, if $j\in \Gamma^{k+1}(i)$, it follows that $(M_1'M_k)_{i,j}\geq \epsilon_1\cdot \epsilon_k$. Letting $\epsilon_{k+1}\triangleq \epsilon_1\cdot \epsilon_k$ and noting the product of stochastic matrices is stochastic, the claim follows.
\end{proof}

Finally, we show that one can get the geometric rate of convergence with respect to a natural potential function.
\begin{lemma}
\label{lem:geom}
Let $\mathbf{x}_1,\ldots,\mathbf{x}_n\in \mathbb{S}^{d-1}$ all lie strictly on one side of a hyperplane in $\mathbb{R}^d$ with margin at least $\lambda=\lambda(n,d)>0$. Define $\Phi(\mathbf{x}_1,\ldots,\mathbf{x}_n)$ by 
\begin{equation}
\label{eq:optvec}
    \Phi(\mathbf{x}_1,\ldots,\mathbf{x}_n) = \min_{\mathbf{v}\in \mathbb{S}^{d-1}}\max_{i\in [n]} \angle (\mathbf{v},\mathbf{x}_i).
\end{equation}

Let $\mathbf{x}_1',\ldots,\mathbf{x}_n'$ be the updated vectors after $n$ iterations of the update rule in \Cref{eq:party} when the second summand is empty. Then there exists $c=c(n,d,\bm{\eta})>0$ such that
\begin{equation*}
    \Phi(\mathbf{x}'_1,\ldots,\mathbf{x}'_n)\leq (1-c)\Phi(\mathbf{x}_1,\ldots,\mathbf{x}_n).
\end{equation*}
\end{lemma}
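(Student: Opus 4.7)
The plan is to use \Cref{lem:stochmat} to express the $n$-step update as a row-stochastic mixing step, and then use the witness vector $\mathbf{y}$ from \Cref{lem:conic} to show directly that each new opinion $\mathbf{x}'_i$ lies within angle $(1-c)\Phi$ of $\mathbf{y}$, whence $\Phi$ contracts by $(1-c)$. Concretely, I would fix $\mathbf{y}\in \mathbb{S}^{d-1}$ attaining the $\min$ in the definition of $\Phi$, let $\theta_j=\angle(\mathbf{y},\mathbf{x}_j)\in[0,\Phi]$, and decompose $\mathbf{x}_j=\cos\theta_j\,\mathbf{y}+\mathbf{w}_j$ with $\mathbf{w}_j\in H:=\mathbf{y}^{\perp}$ of norm $\sin\theta_j\leq \sin\Phi$. \Cref{lem:conic} supplies $\mathbf{0}\in\mathrm{conv}(\mathbf{w}_1,\ldots,\mathbf{w}_n)$, which is the geometric fact that will make the potential contract.

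Next I would apply \Cref{lem:stochmat} to write $\mathbf{x}'_i \propto (M\mathbf{x})_i$ with $M=\epsilon\Pi+(1-\epsilon)Q$, where $\Pi$ is the all-$1/n$ matrix, $Q$ is row-stochastic, and $\epsilon=\epsilon(n,d,\boldsymbol{\eta})>0$. Decomposing along $\mathbf{y}$ gives $(M\mathbf{x})_i = a_i\,\mathbf{y}+\mathbf{b}_i$ where $a_i=\sum_j M_{ij}\cos\theta_j\geq \cos\Phi$ (since $\cos$ is decreasing on $[0,\pi/2]$ and $\Phi<\pi/2$, which in particular ensures $a_i>0$), and $\mathbf{b}_i=\epsilon\bar{\mathbf{w}}+(1-\epsilon)\sum_j Q_{ij}\mathbf{w}_j$ with $\bar{\mathbf{w}}=\tfrac1n\sum_j\mathbf{w}_j$. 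Rescaling the $\mathbf{w}_j$'s by $\sin\Phi$ so that \Cref{lem:contract} applies yields $\|\bar{\mathbf{w}}\|_2\leq (1-1/n)\sin\Phi$, while row-stochasticity of $Q$ and the triangle inequality give $\|\sum_j Q_{ij}\mathbf{w}_j\|_2\leq \sin\Phi$. Combining these bounds:
\begin{equation*}
    \tan\angle(\mathbf{y},\mathbf{x}'_i) \;=\; \frac{\|\mathbf{b}_i\|_2}{a_i}\;\leq\; \frac{(1-\epsilon/n)\sin\Phi}{\cos\Phi}\;=\;\Bigl(1-\frac{\epsilon}{n}\Bigr)\tan\Phi.
\end{equation*}

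The main obstacle is the final step: converting this multiplicative \emph{tangent} contraction into a multiplicative \emph{angle} contraction with a constant independent of the input configuration. This is precisely where the hypothesis supplied by \Cref{lem:split}---that all $\mathbf{x}_j$'s lie on one side of a hyperplane with margin at least $\lambda(n,d)>0$---plays a critical role, since it forces $\Phi \leq \arccos\lambda<\pi/2$ and so keeps $\Phi$ bounded away from the singularity of $\tan$. On the compact interval $[0,\arccos\lambda]$, the function $\phi\mapsto \arctan\bigl((1-\epsilon/n)\tan\phi\bigr)/\phi$ extends continuously (with limit $1-\epsilon/n$ as $\phi\to 0$), is strictly less than $1$ at every point, and therefore attains some maximum value $1-c$ with $c=c(n,d,\boldsymbol{\eta})>0$. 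Thus $\angle(\mathbf{y},\mathbf{x}'_i)\leq (1-c)\Phi$ for every $i$, and plugging $\mathbf{y}$ as a candidate witness into the $\min$ defining $\Phi(\mathbf{x}'_1,\ldots,\mathbf{x}'_n)$ closes the argument:
\begin{equation*}
    \Phi(\mathbf{x}'_1,\ldots,\mathbf{x}'_n)\;\leq\; \max_i\angle(\mathbf{y},\mathbf{x}'_i)\;\leq\; (1-c)\,\Phi(\mathbf{x}_1,\ldots,\mathbf{x}_n).
\end{equation*}
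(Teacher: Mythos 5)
Your proposal is correct and follows essentially the same route as the paper: the same orthogonal decomposition along the optimizer $\mathbf{y}$ and its complement, the same use of \Cref{lem:stochmat}, \Cref{lem:conic}, and \Cref{lem:contract} to obtain $\tan\angle(\mathbf{y},\mathbf{x}'_i)\leq(1-\epsilon/n)\tan\Phi$, and the same use of the margin $\lambda$ to keep $\Phi$ away from $\pi/2$. The only (cosmetic) difference is the last step: you convert the tangent contraction into an angle contraction by a compactness/continuity argument on $[0,\arccos\lambda]$, whereas the paper does it via the explicit mean-value-theorem bound of \Cref{lem:derivs}; both yield a constant $c(n,d,\bm{\eta})>0$ independent of the configuration.
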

\begin{proof}
Let $\mathbf{y}$ be an optimizer in \Cref{eq:optvec} and let $H$ be the orthogonal complement. Let $P_H$ be the corresponding orthogonal projection onto $H$ and $P_{\mathbf{y}}$ be the orthogonal projection onto $\mathbf{y}$. Note that for any vector $\mathbf{z}$, $\mathbf{z}=P_H\mathbf{z}+P_{\mathbf{y}}\mathbf{z}$. By \Cref{lem:stochmat}, we have $\mathbf{x}'\propto M\mathbf{x}$ for some stochastic matrix $M$ such that $M=\epsilon \Pi+(1-\epsilon)Q$.

Now, observe that by the definition of $\Phi$ and from elementary geometry, $\|P_H(\mathbf{x}_i)\|_2\leq \sin(\Phi(\mathbf{x}_1,\ldots,\mathbf{x}_1))$ while $\langle \mathbf{y},\mathbf{x}_i\rangle=\langle \mathbf{y},P_{\mathbf{y}}\mathbf{x}_i\rangle\geq \cos(\Phi(\mathbf{x}_1,\ldots,\mathbf{x}_1))$ for all $i\in [n]$. Now by definition,
\begin{equation*}
    \Phi(\mathbf{x}'_1,\ldots,\mathbf{x}'_n)\leq \max_{i\in [n]} \angle (\mathbf{y},\mathbf{x}'_i) = \max_{i\in [n]} \angle (\mathbf{y},(M\mathbf{x})_i),
\end{equation*}
as angles do not change under positive scaling.

Therefore, we consider the (unnormalized) set of vectors $(M\mathbf{x})_i$. By linearity and the fact $M$ is row-stochastic, we still have $\|P_{\mathbf{y}}(M\mathbf{x})_i\|_2=\langle \mathbf{y}, (M\mathbf{x})_i\rangle\geq \cos(\Phi(\mathbf{x}_1,\ldots,\mathbf{x}_n))$. On the other hand,
\begin{equation}
\label{eq:sin1}
    \|P_H(M\mathbf{x})_i\|=\|\epsilon P_H( \overline{\mathbf{x}}) + (1-\epsilon)P_H \tilde{\mathbf{x}}_i\|,
\end{equation}
where $\overline{\mathbf{x}}=\frac{1}{n}\sum_{i=1}^n \mathbf{x}_i$ and $\tilde{\mathbf{x}}_i$ is some arbitrary convex combination. We thus have by linearity
\begin{equation}
\label{eq:sin2}
    \|P_H \tilde{\mathbf{x}}_i\|\leq \sin(\Phi(\mathbf{x}_1,\ldots,\mathbf{x}_n)).
\end{equation}
Moreover, again by linearity, $P_H\overline{\mathbf{x}}=\frac{1}{n}\sum_{i=1}^n P_H\mathbf{x}_i$. Recall that by \Cref{lem:conic}, $\mathbf{0}\in \mathrm{conv}(P_H\mathbf{x}_1,\ldots, P_H\mathbf{x}_n)$ and therefore by scaling and applying \Cref{lem:contract} with $\mathbf{z}_i=P_H\mathbf{x}_i$,
\begin{equation}
\label{eq:sin3}
    \|P_H\overline{\mathbf{x}}\|_2\leq \left(1-\frac{1}{n}\right)\sin(\Phi(\mathbf{x}_1,\ldots,\mathbf{x}_n)).
\end{equation}

Putting it all together, we now have by monotonicity of the function $\tan(\alpha)$ on $[0,\pi/2)$ and the triangle inequality with \Cref{eq:sin1,eq:sin2,eq:sin3} that
\begin{align*}
    \tan\left(\Phi(\mathbf{x}'_1,\ldots,\mathbf{x}'_n)\right)&\leq \max_{i\in [n]}\frac{\|P_H(M\mathbf{x})_i\|_2}{\|P_{\mathbf{y}}(M\mathbf{x})_i\|}\\
    &\leq \frac{\epsilon \left(1-\frac{1}{n}\right)\sin(\Phi(\mathbf{x}_1,\ldots,\mathbf{x}_n))+(1-\epsilon)\sin(\Phi(\mathbf{x}_1,\ldots,\mathbf{x}_n))}{\cos(\Phi(\mathbf{x}_1,\ldots,\mathbf{x}_n))}\\
    &=\left(1-\frac{\epsilon}{n}\right)\tan(\Phi(\mathbf{x}_1,\ldots,\mathbf{x}_n)).
\end{align*}
We may assume that $\Phi(\mathbf{x}_1,\ldots,\mathbf{x}_n)<\pi/2-\delta$ for some small enough $\delta=\delta(n,d)>0$ by the assumption that $\mathbf{x}_1,\ldots,\mathbf{x}_n$ lie on one side of a hyperplane with margin $\lambda=\lambda(n,d)>0$. The function $\tan(\cdot)$ has derivative bounded between $1$ and some constant depending on $n,d$ on the interval $[0,\pi/2-\delta]$ and thus by \Cref{lem:derivs}, the above geometric decay of $\tan\circ \Phi$ implies that $
    \Phi(\mathbf{x}'_1,\ldots,\mathbf{x}'_n)\leq (1-c)\cdot \Phi(\mathbf{x}_1,\ldots,\mathbf{x}_n)$
for some constant $c=c(n,d,\bm{\eta})>0$.
\end{proof}

\begin{lemma}
\label{lem:derivs}
Let $f:[0,a]\to [0,b]$ be a differentiable function such that $f(0)=0$ and $1\leq f'\leq K$ for some constant $K$. If $f(x)\leq (1-c)f(y)$, then $x\leq (1-c')y$ for $c'=c/2K$. 
\end{lemma}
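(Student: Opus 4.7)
The plan is to use the mean value theorem (MVT) to translate between inequalities on $f$-values and inequalities on their arguments. Because the derivative is bounded both above and below, applying MVT to the intervals $[0,y]$ and $[x,y]$ will yield a controlled ``distortion'' between $f$ and the identity map, which is all we need.

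First I would observe that since $f' \geq 1 > 0$ on $[0,a]$, the function $f$ is strictly increasing. The hypothesis $f(x) \leq (1-c)f(y) \leq f(y)$ then forces $x \leq y$ (assuming without loss of generality that $c \in (0,1]$; the case $c \leq 0$ is trivial, and the case $f(y)=0$ forces $y=0$ hence $x=0$). Next, MVT on $[0,y]$ combined with the lower bound $f' \geq 1$ yields $f(y) \geq y$.

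The main step is to apply MVT on $[x,y]$ together with the upper bound $f' \leq K$ to get
\begin{equation*}
f(y) - f(x) \leq K(y - x).
\end{equation*}
On the other hand, the hypothesis $f(x) \leq (1-c)f(y)$ is equivalent to $f(y) - f(x) \geq c\,f(y)$, and combined with $f(y) \geq y$ this gives $f(y) - f(x) \geq cy$. Chaining these two inequalities produces $K(y-x) \geq cy$, i.e.,
\begin{equation*}
x \leq \left(1 - \frac{c}{K}\right) y.
\end{equation*}
Since $c/K \geq c/(2K)$ and $y \geq 0$, this immediately yields the desired bound with $c' = c/(2K)$ (in fact with the better constant $c/K$; the factor of $2$ is merely slack that simplifies combining with other estimates in \Cref{lem:geom}).

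There is no real obstacle here: the argument is a short two-sided MVT bracketing. The only minor care needed is to check that the chain of comparisons $x \leq y$, $f(y) \geq y$, and the MVT slope bound are valid on the specified domain, which follows directly from $f(0)=0$ and $1 \leq f' \leq K$.
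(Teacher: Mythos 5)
Your proof is correct and follows essentially the same route as the paper's: a two-sided mean value theorem argument, using $f'\geq 1$ for a lower bound on function values and $f'\leq K$ on $[x,y]$ for the slope upper bound. The only difference is that you bound $f(y)\geq y$ rather than $f(x)\geq x$, which yields $x\leq\left(1-\frac{c}{K}\right)y$ directly, whereas the paper derives $y\geq\left(1+\frac{c}{K}\right)x$ and then invokes $(1+z)^{-1}\leq 1-z/2$, which is exactly where the factor of $2$ in the stated constant $c'=c/2K$ comes from.
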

\begin{proof}
Observe from the Mean Value Theorem, the assumption, and the fact $f(y)\geq f(x)\geq x$ from the derivative condition that 
\begin{equation*}
    y-x\geq \frac{f(y)-f(x)}{K}\geq \frac{cf(y)}{K}\geq \frac{cx}{K}.
\end{equation*}
It immediately follows that $y\geq \left(1+\frac{c}{K}\right)x$. As $(1+z)^{-1}\leq 1-z/2$ for $z\leq 1$, we obtain the claim.
\end{proof}

Finally, we can return to the proof of \Cref{thm:party}:
\begin{proof}[Proof of \Cref{thm:party}]
By sign-invariance and \Cref{lem:split}, we may assume that each component of $\mathbf{X}_0=(\mathbf{X}_0^{(1)},\ldots, \mathbf{X}_0^{(n)})$ lies strictly on one side of a hyperplane in $\mathbb{R}^d$ with margin at least $\lambda=\lambda(n,d)>0$; this follows because we may sign the starting configuration arbitrarily, run the dynamics, and then undo the signing by sign-invariance without affecting the polarization properties of the dynamics. We now show that with some constant probability (independent of $\mathbf{X}_0$), the dynamics monotonically decrease $\Phi(\mathbf{X}_t)$ to zero. Moreover, it is easy to see by the triangle inequality that for any vectors $\mathbf{x}_1,\ldots,\mathbf{x}_n$,
\begin{equation}
\label{eq:phiequiv}
    \Phi(\mathbf{x}_1,\ldots,\mathbf{x}_n)\leq \max_{i,j} \angle (\mathbf{x}_i,\mathbf{x}_j)\leq 2\Phi(\mathbf{x}_1,\ldots,\mathbf{x}_n).
\end{equation}
Therefore, $\Phi(\mathbf{X}_t)\to 0$ implies the same of the maximum angle between components and thus implies polarization. 

Observe that if the dynamics do not split the vectors on any of the next $n$ iterations, \Cref{lem:geom} implies that $\Phi(\mathbf{X}_n)\leq (1-c)\Phi(\mathbf{X}_0)$ for some constant $c>0$ independent of $\mathbf{X}_0$. This occurs with at least some nonzero constant probability $\delta$ (again, depending on $n,d,\bm{\eta}$, but not on $\mathbf{X}_0$) by the margin condition, \Cref{lem:bad1}, and the equivalence with the standard Haar measure. 

Now, recall from \Cref{lem:bad1} and a union bound that the probability that the dynamics split up any given $\mathbf{x_1},\ldots,\mathbf{x}_n$ that all lie strictly on one side of a hyperplane with is $O(\max_{i,j} \angle (\mathbf{x}_i,\mathbf{x}_j))=O(\Phi(\mathbf{x}_1,\ldots,\mathbf{x}_n))$ by \Cref{eq:phiequiv}, where the implicit constant depends on $n,d$ and the measure of equivalence. The probability that the sequence $\Phi(\mathbf{X}_{t\cdot n})\to 0$ is at least the probability that $\Phi(\mathbf{X}_{(t+1)\cdot n})\leq (1-c)\Phi(\mathbf{X}_{t\cdot n})$ for each $t\geq 0$. As we have shown that the probability that this fails decays geometrically, it follows that this latter probability is at least
\begin{equation}
\label{eq:problb}
    \prod_{t=0}^{\infty} \left(1-\Theta_{n,d,\bm{\eta}}\left((1-c)^t\right)\right)>c'
\end{equation}
for some constant $c'>0$ that does not depend on $\mathbf{X}_0$, where the inequality follows from the same standard analysis argument as in \Cref{thm:shjmr}. This immediately implies that strong polarization holds with constant probability on the restriction of the sequence to each $n$ steps. To extend this to the whole sequence, by inspecting the proof of \Cref{lem:geom}, it is easy to see that if the vectors are not split at some time $t+1$, then $\Phi(\mathbf{X}_{t+1})\leq \Phi(\mathbf{X}_t)$ for \emph{every} $t\geq 0$, not just on the subsequence (though we may not have strict contraction). In particular, the event that the dynamics never split up the vectors implies $\Phi(\mathbf{X}_t)\to 0$, and this holds with constant nonzero probability depending on just $n,d,\bm{\eta}$. By \Cref{eq:phiequiv}, \Cref{eq:problb}, and \Cref{thm:levy}, we conclude the result.
\end{proof}

\section{Weak Polarization and Markov Chains}
\label{sec:markov}

In the previous section, we showed that the strong polarization observed by HJMR extends to nontrivial variants of geometric opinion dynamics. However, we caution that it is \emph{not} generally true that just any geometric opinion model, even one that is continuous and $P$-invariant, will strongly polarize from an arbitrary configuration of starting opinions. As a trivial example, suppose that the updates are such that they simply apply a common orthogonal transformation to each vector at each time. This is clearly continuous and $P$-invariant, but obviously does not lead polarization except in very pathological examples.

While these trivial examples show that strong polarization does not necessarily hold in such models, it is natural to wonder if it is possible for other dynamics to satisfy other forms of polarization even if they do not satisfy strong polarization. In this section, we consider weaker forms of polarization than the strong form from \Cref{def:strongpolar} (restated as part (1) in the definition below):

\begin{definition}
Let $\mathbf{X}_t:=(\mathbf{X}_t^{(1)},\ldots, \mathbf{X}_t^{(n)})$ be a discrete-time Markov chain as given by \cref{eq:dynamics}.
\begin{enumerate}
    \item $\mathbf{X}_t$ \textbf{strongly polarizes} (from $\mathbf{X}_0$) if almost surely, $\rho(\mathbf{X}_t,P)\to 0$. 
    
    \item $\mathbf{X}_t$ \textbf{weakly polarizes} (from $\mathbf{X}_0$) if, for any fixed $\epsilon>0$, $\Pr(\rho(\mathbf{X}_t,P)\geq \epsilon) \overset{t\to \infty}{\to} 0.$
    
    \item $\mathbf{X}_t$ \textbf{weakly polarizes on average} (from $\mathbf{X}_0$) if, for any fixed $\epsilon>0$,
    \begin{equation*}
        \lim_{T\to \infty} \frac{\mathbb{E}\left[\sum_{t=1}^{T} I_{P_{\epsilon}}(\mathbf{X}_t)\right]}{T}=1,
    \end{equation*}
    where $I_{P_{\epsilon}}$ is the indicator of the set $P_{\epsilon}\triangleq \{\mathbf{z}\in (\mathbb{S}^{d-1})^n: \rho(\mathbf{z},P)\leq \epsilon\}$.
\end{enumerate}
\end{definition}

It is not difficult to see that these forms of convergence are listed in decreasing order of strength:

\begin{proposition}
\label{prop:forms}

Strong polarization implies weak polarization, which in turn implies weak polarization on average.
\end{proposition}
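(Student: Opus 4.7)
The plan is to prove the two implications separately, using two classical facts from elementary probability: almost sure convergence implies convergence in probability, and a convergent sequence of reals has the same limit for its Cesàro averages.

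For the first implication (strong $\Rightarrow$ weak), I would fix $\epsilon>0$ and consider the real-valued random variables $Y_t:=\rho(\mathbf{X}_t,P)$. Strong polarization says $Y_t\to 0$ almost surely, and I want to conclude $\Pr(Y_t\geq \epsilon)\to 0$. This is just the standard fact that a.s.\ convergence implies convergence in probability: for each $\epsilon>0$, the event $\{Y_t\geq \epsilon\}$ is contained in $\{\sup_{s\geq t} Y_s\geq \epsilon\}$, and the probability of the latter tends to $0$ by continuity of measure from above, since $\bigcap_t \{\sup_{s\geq t} Y_s\geq \epsilon\}\subseteq \{Y_s\not\to 0\}$ has probability zero by strong polarization. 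So $\Pr(Y_t\geq \epsilon)\to 0$ and weak polarization holds.

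For the second implication (weak $\Rightarrow$ weak on average), I would note that by Fubini/linearity of expectation,
\begin{equation*}
    \frac{1}{T}\,\mathbb{E}\!\left[\sum_{t=1}^{T} I_{P_\epsilon}(\mathbf{X}_t)\right] = \frac{1}{T}\sum_{t=1}^{T}\Pr(\mathbf{X}_t\in P_\epsilon) = 1 - \frac{1}{T}\sum_{t=1}^{T}\Pr(\rho(\mathbf{X}_t,P)>\epsilon).
\end{equation*}
Setting $p_t:=\Pr(\rho(\mathbf{X}_t,P)>\epsilon)$, weak polarization says $p_t\to 0$ (with a tiny bit of care: the definition uses $\geq \epsilon$ rather than $>\epsilon$, but this is handled by applying weak polarization at $\epsilon/2$, or by bounding $\Pr(\rho(\mathbf{X}_t,P)>\epsilon)\leq \Pr(\rho(\mathbf{X}_t,P)\geq \epsilon)$). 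Then the elementary Cesàro-means fact implies $\frac{1}{T}\sum_{t=1}^T p_t\to 0$, which gives the claimed limit of $1$.

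No step here is really an obstacle: both implications are textbook-level consequences of the definitions together with measure-theoretic basics (continuity of measure and Cesàro averaging), and the only minor subtlety is the cosmetic $\geq$ versus $>$ in the threshold, which is harmless since we may apply the weak polarization hypothesis at a slightly smaller $\epsilon$. No properties of the geometric opinion dynamics beyond the definition of the three modes of convergence are used.
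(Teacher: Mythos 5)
Your proof is correct and follows essentially the same route as the paper: the first implication is the standard ``almost sure convergence implies convergence in probability'' argument (the paper phrases it via an almost surely finite random time $T$ rather than via $\sup_{s\geq t}\rho(\mathbf{X}_s,P)$, but it is the same fact), and the second is the Ces\`aro-averaging argument that the paper simply writes out explicitly with an $\epsilon$--$\delta$ head/tail split. Your handling of the $\geq$ versus $>$ threshold is fine and no gap remains.
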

\begin{proof}
Suppose that $\mathbf{X}_t$ strongly polarizes. Then by definition, for any fixed $\epsilon>0$, there exists an almost surely finite (but random) $T$ such that for $t\geq T$, $\rho(\mathbf{X}_t,P)< \epsilon$. For any fixed $\delta>0$, there exists $K=K(\delta)$ large enough such that $\Pr(T\geq K)< \delta$. It follows that for $t\geq K$,
\begin{equation*}
    \Pr(\rho(\mathbf{X}_t,P)\geq \epsilon)\leq \Pr(T\geq K)< \delta.
\end{equation*}
As $\delta>0$ is arbitrary, it follows that the probability of being at least $\epsilon$-far from $P$ tends to zero. As $\epsilon>0$ is arbitrary, this gives weak polarization.

For the second implication, fix $\delta>0$ and let $K$ be large enough so that for all $t\geq K$, $\Pr(\rho(\mathbf{X}_t,P)\geq \epsilon/n)\leq \delta/2$. Then for all $T\geq 2K/\delta$, 
\begin{equation*}
    \frac{\mathbb{E}\left[\sum_{t=1}^{T} I_{P_{\epsilon}}(\mathbf{X}_t)\right]}{T}\geq \frac{\sum_{t=1}^{T} \Pr(\rho(\mathbf{X}_t,P)\leq \epsilon/n)}{T}\geq (1-\delta/2) - K/T
    \geq 1-\delta.
\end{equation*}
As this holds for all large enough $T$ and $\delta$ was arbitrary, the claim follows.
\end{proof}

\subsection{Polarization and Invariant Distributions}
With these weaker forms of polarization, we now proceed to show their connection to the theory of invariant distributions in general Markov chains. To begin, we will actually consider the simplest case of $n=1$ agent, where clearly all forms of polarization are trivial.

\begin{proposition}
\label{prop:invdis}
Consider $n=1$ agent updating opinions via dynamics that are continuous in the current state, i.e. the function $P_{\mathbb{S}^{d-1}}(\mathbf{x}+f_1(\mathbf{x},\xi))$ is continuous in $\mathbf{x}\in \mathbb{S}^d$ for every fixed realization of $\xi$. Then there exists an invariant distribution on $\mathbb{S}^{d-1}$ for these dynamics.

Moreover, assume 
\begin{enumerate}
    \item There exists some point $\mathbf{y}\in \mathbb{S}^{d-1}$ such that for any open neighborhood $U$ of $\mathbf{y}$, and any choice of starting point $x\in \mathbb{S}^{d-1}$ for the dynamics, there exists some $t=t(\mathbf{x},U)$ such that $\Pr_{\mathbf{x}}(\mathbf{X}^{(1)}_t\in U)>0$.
    
    \item The dynamics admit a continuous density with respect to the standard Haar measure on $\mathbb{S}^{d-1}$, i.e. distribution of $\mathbf{X}^{(1)}_1$ given $\mathbf{X}^{(1)}_0=\mathbf{x}$ has a density with respect to standard Haar measure.
\end{enumerate} 
Then the above invariant distribution is \emph{unique}.
\end{proposition}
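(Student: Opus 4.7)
The plan is to prove existence via a Krylov-Bogolyubov argument exploiting the compactness of $\mathbb{S}^{d-1}$, and uniqueness via a strong-Feller-plus-accessibility argument adapted from the general theory of Markov chains on uncountable state spaces.

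For existence, I would fix any $\mathbf{x}_0 \in \mathbb{S}^{d-1}$, let $\nu_t$ denote the distribution of $\mathbf{X}^{(1)}_t$ started at $\mathbf{x}_0$, and form the Ces\`aro averages $\mu_n = \frac{1}{n} \sum_{t=0}^{n-1} \nu_t$. Since $\mathbb{S}^{d-1}$ is compact, Prokhorov's theorem (tightness being automatic) yields a subsequence $\mu_{n_k}$ converging weakly to some probability measure $\mu$. The continuity hypothesis implies the Feller property: for any bounded continuous $\varphi$, the map $\mathbf{x} \mapsto \mathbb{E}_\xi[\varphi(P_{\mathbb{S}^{d-1}}(\mathbf{x} + f_1(\mathbf{x}, \xi)))]$ is continuous in $\mathbf{x}$ by bounded convergence, since its integrand is pointwise continuous in $\mathbf{x}$ and uniformly bounded by $\|\varphi\|_{\infty}$. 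Testing the standard telescoping identity against any such $\varphi$ then shows that $\mu$ is invariant.

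For uniqueness, I would suppose for contradiction that there are two distinct invariant probability measures; by the ergodic decomposition, we may pass to two ergodic (hence mutually singular) invariant measures $\pi_1, \pi_2$. The continuous density hypothesis makes the transition kernel strong Feller, and combining with invariance $\pi_i = \pi_i K$ shows each $\pi_i$ itself admits a continuous density $g_i(\mathbf{z}) = \int p(\mathbf{x}, \mathbf{z}) \, d\pi_i(\mathbf{x})$ with respect to Haar measure, where $p$ is the one-step continuous density. Accessibility of $\mathbf{y}$ combined with invariance then forces $\mathbf{y} \in \mathrm{supp}(\pi_i)$ for both $i$: given any open neighborhood $U$ of $\mathbf{y}$, the sets $A_t := \{\mathbf{x} : \Pr_{\mathbf{x}}(\mathbf{X}^{(1)}_t \in U) > 0\}$ cover $\mathbb{S}^{d-1}$ by the accessibility hypothesis, so some $A_{t^*}$ has positive $\pi_i$-mass, and then $\pi_i(U) = \int \Pr_{\mathbf{x}}(\mathbf{X}^{(1)}_{t^*} \in U) \, d\pi_i(\mathbf{x}) > 0$ from strict positivity of the integrand on a positive-measure set.

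The main obstacle is deriving a contradiction from mutual singularity given that $\mathbf{y}$ lies in both topological supports. Mutual singularity forces the open sets $\{g_1 > 0\}$ and $\{g_2 > 0\}$ to be disjoint up to Haar-null sets, yet both closures contain $\mathbf{y}$. The resolution---essentially Doob's classical uniqueness theorem for strong Feller chains with an accessible point (see Hairer's surveys cited in \Cref{sec:related})---is to propagate topological positivity through the smoothing kernel: $p(\mathbf{y}, \cdot)$ is continuous and, since $\mathbf{y}$ is accessible from itself, not identically zero, so positive on a nonempty open $V$; by joint continuity, $p(\mathbf{x}, \cdot)$ remains positive on a common open subset of $V$ for all $\mathbf{x}$ in some neighborhood $W$ of $\mathbf{y}$. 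Since both $\pi_i$ charge $W$ by the preceding paragraph, integrating forces $g_1$ and $g_2$ both to be positive on that common open set, contradicting disjointness. The technical subtlety in making this rigorous lies in carefully transferring positivity between topological and essential supports through the strong Feller kernel without needing a uniform minorization condition.
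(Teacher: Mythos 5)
Your proposal is correct, and its skeleton is the same as the paper's: existence via Krylov--Bogolyubov (Feller property plus compactness of $\mathbb{S}^{d-1}$) and uniqueness via the strong Feller property together with the accessible point $\mathbf{y}$. The difference is one of granularity: the paper disposes of both steps by citation (Corollary 4.18 of Hairer's 2006 notes for existence, Corollary 2.7 of the 2010 survey---a Doob--Khasminskii-type theorem---for uniqueness), whereas you unfold both arguments, giving the Ces\`aro-average construction explicitly and a self-contained Doob-style uniqueness proof: ergodic decomposition, mutual singularity, the observation that invariance through the kernel gives each ergodic measure a continuous density $g_i$, accessibility forcing $\mathbf{y}\in\mathrm{supp}(\pi_i)$ for both, and then propagation of strict positivity of $p(\cdot,\cdot)$ on a product neighborhood $W\times V'$ near $(\mathbf{y},\mathbf{z}_0)$ to make $g_1,g_2$ simultaneously positive on a nonempty open set, contradicting singularity. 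That argument checks out; the only point worth flagging is that both your strong Feller claim and the positivity-propagation step use continuity of the density $p(\mathbf{x},\mathbf{z})$ jointly in $(\mathbf{x},\mathbf{z})$ (or at least continuity in $\mathbf{x}$ in a suitable sense), which is slightly more than the literal wording of assumption (2) but is exactly how the paper itself uses that assumption to invoke the strong Feller property, so the reading is consistent. One cosmetic remark: $p(\mathbf{y},\cdot)$ is not identically zero simply because it is a probability density integrating to one; no appeal to accessibility of $\mathbf{y}$ from itself is needed there. What your route buys is a proof that does not lean on the cited corollaries as black boxes; what the paper's route buys is brevity and immediate applicability of the surrounding machinery (e.g., the same citations are reused for the $n$-agent statement that follows).
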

\begin{proof}
From our continuity assumption, it follows that the dynamics are \emph{Feller}, i.e. for any continuous function $h:\mathbb{S}^{d-1}\to \mathbb{S}^{d-1}$, the map from $\mathbf{x}$ to $\mathbb{E}[h(\mathbf{x}')]$ where $\mathbf{x}'$ takes one step of the dynamics, remains continuous (Definition 2.36 of \cite{hairer2006ergodic}). Because $\mathbb{S}^{d-1}$ is compact and the dynamics are Feller, it follows that there exists at least one invariant measure (Corollary 4.18 of \cite{hairer2006ergodic}).

For the second part, the first additional assumption implies there exists an \emph{accessible} point $\mathbf{y}\in \mathbb{S}^{d-1}$, while the second implies that the dynamics satisfy the \emph{strong Feller} property (Definition 2.5 of \cite{hairer2010convergence}), i.e. the transition operator maps bounded measurable functions to continuous functions. From Corollary 2.7 of \cite{hairer2010convergence}, this implies that the dynamics have a unique invariant distribution.
\end{proof}

We now consider invariant distributions on \emph{the joint opinion vector} and then compare to the above results.

\begin{proposition}
Consider $n\geq 1$ agents updating opinions, and suppose the dynamics are jointly continuous for every fixed realization of $\xi$. Then there exists an invariant distribution for the vector $(\mathbf{X}^{(1)},\ldots,\mathbf{X}^{(n)})$ under these dynamics. Moreover, if the dynamics preserve $\sigma(D)$ for each $\sigma\in \{-1,1\}^n$, there exists such a distribution supported in $\bm{\sigma}(D)$ for every $\bm{\sigma}\in \{-1,1\}^n$.

If the dynamics are oblivious and satisfy the stronger assumptions of the previous proposition componentwise, then the $i$th marginal distribution of every such invariant distribution is unique.
\end{proposition}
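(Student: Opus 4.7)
The plan is to adapt the proof strategy of \Cref{prop:invdis} to the joint state space $(\mathbb{S}^{d-1})^n$, exploiting compactness and the Feller property, and then to handle the symmetry and uniqueness refinements by restriction and marginalization respectively.

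For the first assertion, I would observe that joint continuity of the single-step update means that the transition map $\mathbf{x}\mapsto P_{\mathbb{S}^{d-1}}(\mathbf{x}^{(i)}+f_i(\mathbf{x},\xi))$ is continuous for each $i$ and every fixed $\xi$, so the joint transition kernel sends continuous bounded functions on $(\mathbb{S}^{d-1})^n$ to continuous functions by dominated convergence over the distribution of $\xi$. This is precisely the Feller property for the joint chain. Since $(\mathbb{S}^{d-1})^n$ is compact as a finite product of compact spaces, the Krylov--Bogolyubov existence result invoked in the previous proposition (Corollary 4.18 of \cite{hairer2006ergodic}) applies verbatim and produces an invariant probability measure on $(\mathbb{S}^{d-1})^n$.

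For the second assertion, I would observe that each $\bm{\sigma}(D)$ is a closed (hence compact) subset of $(\mathbb{S}^{d-1})^n$, being the image of $\mathbb{S}^{d-1}$ under the continuous map $\mathbf{x}\mapsto(\sigma_1\mathbf{x},\ldots,\sigma_n\mathbf{x})$. By hypothesis $\bm{\sigma}(D)$ is invariant under the dynamics, so the restricted transition kernel defines a Feller Markov chain on the compact space $\bm{\sigma}(D)$. Applying the same existence result to the restricted chain produces an invariant probability measure on $\bm{\sigma}(D)$, which extends by zero outside $\bm{\sigma}(D)$ to an invariant distribution for the full joint chain supported on $\bm{\sigma}(D)$.

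For the uniqueness statement, obliviousness implies that each coordinate $\mathbf{X}_t^{(i)}$ is itself a time-homogeneous Markov chain on $\mathbb{S}^{d-1}$ driven by the common sequence $\xi_t$; in particular, the $i$th marginal of the joint transition kernel is the one-dimensional transition kernel of this chain. If $\nu$ is any invariant distribution of the joint chain and $\nu_i$ is its $i$th marginal, then $\nu_i$ is invariant under the $i$th componentwise chain, because marginalizing commutes with the application of the joint kernel when obliviousness makes coordinates evolve componentwise. Under the strong assumptions of \Cref{prop:invdis} applied componentwise, each such componentwise chain admits a unique invariant distribution, so $\nu_i$ is uniquely determined. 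The main technical point to be careful about is the routine check that obliviousness really does imply the marginal kernel on the $i$th coordinate is well defined independently of the other coordinates --- this follows directly from the update rule depending only on $\mathbf{X}_t^{(i)}$ and $\xi_t$, so that the conditional distribution of $\mathbf{X}_{t+1}^{(i)}$ given $\mathbf{X}_t$ depends only on $\mathbf{X}_t^{(i)}$.
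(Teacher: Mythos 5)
Your proposal is correct and follows essentially the same route as the paper: existence via the Feller property and Krylov--Bogolyubov on the compact product space, existence of measures on $\bm{\sigma}(D)$ by restricting the chain to these invariant compact sets, and uniqueness of marginals by noting that obliviousness makes each marginal of an invariant distribution invariant for the $n=1$ chain, to which the previous proposition applies. The extra details you supply (dominated convergence for Feller, compactness of $\bm{\sigma}(D)$, the marginalization check) are exactly the routine verifications the paper leaves implicit.
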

\begin{proof}
The first part is identical to the proof of \Cref{prop:invdis} using the Feller property. The second claim holds from the assumed invariance of $\bm{\sigma}(D)$ and noting that the same argument holds when restricting to this invariant, compact set. The last claim follows from the obliviousness, as then the restriction of any invariant distribution to any component must yield an invariant distribution of the $n=1$ dynamics. This distribution is unique from \Cref{prop:invdis}.
\end{proof}
\begin{remark}
Note that by \Cref{prop:invdis}, it is \emph{not} true that there exists a unique invariant distribution for $n>1$ agents for most natural dynamics. The failure of the uniqueness of the invariant distribution arises because the joint dynamics typically will not have either accessible points nor satisfy the strong Feller property, as the set of possible updated values forms a lower-dimensional subset of $\prod_{i=1}^n \mathbb{S}^{d-1}$ and so cannot admit a density.
\end{remark}

With these results in hand, we can finally turn to the connection with polarization, which \emph{a priori} is a statement about whether dynamics starting at arbitrary points converge to the polarized set in some suitable sense. The connection is the following:

\begin{theorem}
\label{thm:wpoa}
Suppose that every invariant distribution of the joint dynamics with continuous updates has support contained in $P$. Then $\mathbf{X}_t$ weakly polarizes on average from any choice of starting vectors.
\end{theorem}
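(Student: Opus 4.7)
The plan is to combine Krylov--Bogolyubov with the compactness of $(\mathbb{S}^{d-1})^n$ in order to convert the hypothesis on invariant distributions into control of time-averaged occupation measures. I would first define the Cesàro-averaged marginal distributions
\[
\mu_T(\cdot) \;:=\; \frac{1}{T}\sum_{t=1}^T \Pr_{\mathbf{X}_0}(\mathbf{X}_t \in \cdot),
\]
and observe that $\mu_T(P_\epsilon) = \mathbb{E}[\sum_{t=1}^T I_{P_\epsilon}(\mathbf{X}_t)]/T$, so weak polarization on average is precisely the statement $\mu_T(P_\epsilon) \to 1$ for every $\epsilon > 0$.

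The proof proceeds by contradiction: suppose there exist $\epsilon, \delta > 0$ and a subsequence $T_k \to \infty$ with $\mu_{T_k}(P_\epsilon^c) \geq \delta$. Since $(\mathbb{S}^{d-1})^n$ is compact metric, Prokhorov's theorem yields a further subsequence (still denoted $\mu_{T_k}$) converging weakly to some probability measure $\mu$. The key step is to verify that $\mu$ is invariant under the one-step transition operator $\mathcal{P}$ of the joint chain. The continuity hypothesis, together with the fact (noted after the definition of continuity in \Cref{sec:preliminaries}) that the unnormalized update is nonzero and so projection to the sphere is well-defined and continuous, implies that the joint dynamics are Feller: $\mathcal{P}f$ is continuous and bounded whenever $f$ is. A short telescoping calculation gives $\|\mathcal{P}^*\mu_{T_k} - \mu_{T_k}\|_{\mathrm{TV}} \leq 2/T_k$, and so for every bounded continuous $f$,
\[
\int f \, d(\mathcal{P}^*\mu) \;=\; \lim_{k \to \infty} \int \mathcal{P}f \, d\mu_{T_k} \;=\; \lim_{k \to \infty} \int f\, d\mu_{T_k} \;=\; \int f\, d\mu,
\]
forcing $\mathcal{P}^*\mu = \mu$; hence $\mu$ is invariant.

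To finish, I invoke the theorem's hypothesis: every invariant distribution is supported in $P$, so $\mu(P) = 1$. Let $F := \{\mathbf{z} : \rho(\mathbf{z},P) \geq \epsilon\}$; this set is closed in $(\mathbb{S}^{d-1})^n$, disjoint from $P$ (so $\mu(F) = 0$), and contains the open set $P_\epsilon^c = \{\mathbf{z} : \rho(\mathbf{z},P) > \epsilon\}$. The portmanteau theorem for closed sets then yields
\[
0 \;=\; \mu(F) \;\geq\; \limsup_{k\to\infty} \mu_{T_k}(F) \;\geq\; \limsup_{k\to\infty} \mu_{T_k}(P_\epsilon^c) \;\geq\; \delta,
\]
a contradiction, so $\mu_T(P_\epsilon) \to 1$ and weak polarization on average holds from any starting vector.

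The main obstacle is the Krylov--Bogolyubov step, namely verifying the Feller property of the \emph{joint} transition operator. This is not automatic for dynamics like the oblivious ones driven by a shared update $\xi_t$, where the one-step kernel lives on a lower-dimensional subset of $(\mathbb{S}^{d-1})^n$ and so admits no density, but the continuity assumption is exactly what is needed to push expectations of continuous test functions through one step of the dynamics. Once Feller is established, weak compactness and portmanteau applied to the closed thickening $F$ close out the argument; the subtle point is choosing $F$ rather than the open set $P_\epsilon^c$ so that the correct direction of portmanteau yields the contradiction.
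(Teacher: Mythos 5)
Your proposal is correct and follows essentially the same route as the paper: Ces\`aro occupation measures, Prokhorov compactness, the Feller property (from continuity of the updates) to show that weak subsequential limits are invariant, the hypothesis to place those limits on $P$, and a portmanteau-type step to transfer this back to the occupation measures. The only cosmetic differences are that you argue by contradiction using portmanteau on the closed thickening $\{\rho(\cdot,P)\geq\epsilon\}$, whereas the paper notes $P_\epsilon$ is a $\mu^*$-continuity set and upgrades subsequential convergence to full convergence by a sub-subsequence argument; these are interchangeable.
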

\begin{proof}
The proof of this result is implicit in the proof of the Krylov-Bogolubov theorem (Theorem 4.17 of \cite{hairer2006ergodic}). Fix any choice $\mathbf{X}_0$ of starting vectors for the joint dynamics. Define the family of probability measures for $T=1,2,\ldots$ by $Q_T(A) := \frac{1}{T}\sum_{t=0}^{T-1} \Pr(\mathbf{X}_t\in A).$
Then because the state space is compact, Prokhorov's Theorem (Theorem 4.15 of \cite{hairer2006ergodic}) asserts that this sequence of measures contains a weakly convergent subsequence $Q_{T_k}$ for $k=1,\ldots$, with weak limit we denote $\mu^*$; moreover, from the proof of the Krylov-Bogolubov theorem and the fact that we assumed the dynamics are Feller, $\mu^*$ is invariant.

Now, by our assumption, every invariant measure for these dyamics is concentrated on $P$. For any fixed $\epsilon>0$, let $P_{\epsilon}$ be as defined above. Because $\mu^*$ is supported on $P$ by assumption, $\mu^*(\partial P_{\epsilon})=0$, where $\partial A$ denotes the boundary of a set $A$. Recall that weak convergence of measures is equivalent to convergence on all continuity sets $A$ that satisfy $\mu^*(\partial A)=0$. It follows immediately that
\begin{equation*}
    Q_{T_k}(P_{\epsilon})=\frac{\sum_{t=0}^{T_k-1}\Pr(\mathbf{X}_t\in P_{\epsilon})}{T_k}\to \mu^*(P_{\epsilon})=1.
\end{equation*}

Moreover, this same argument holds for \emph{any} subsequence of $\{Q_T\}_{T=0}^{\infty}$, possibly with a different invariant measure which is nonetheless supported on $P$ by assumption. In particular, for any subsequence of the scalar sequence $\{Q_T(P_{\epsilon})\}_{T=1}^{\infty}$, there exists a further subsequence which converges to $1$. From standard analysis, any sequence such that every subsequence contains a further subsequence that converges to a fixed number $c$ must itself converge to $c$. Therefore, it follows that $
    Q_T(P_{\epsilon})\to 1,$
which by linearity of expectation is the statement of weak polarization on average.
\end{proof}

The converse always holds even when the dynamics are not continuous (though we do need to assume that there exists \emph{some} invariant distribution).

\begin{theorem}
\label{thm:supp}
Suppose that under the geometric opinion dynamics, there exists an invariant distribution $\mu^*$ that puts nontrivial mass on the set of nonpolarized vectors, i.e. $\mu^*(P^c)>0$. Then the opinion dynamics do not weakly converge on average. In particular, if the dynamics are continuous, then the dynamics weakly polarize on average if and only if every invariant distribution is supported on $P$.
\end{theorem}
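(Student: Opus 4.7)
The plan is to contradict weak polarization on average by starting the dynamics from the invariant distribution $\mu^*$ itself and exploiting its invariance, using that $\mu^*$ places nontrivial mass outside a small neighborhood of $P$. The two main steps will be: (i) verify that $P$ is closed so that $\mu^*(P_\epsilon) < 1$ for all small enough $\epsilon > 0$; and (ii) integrate the hypothetical Cesàro limit against $\mu^*$ and apply invariance term-by-term to derive a contradiction.

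For step (i), each set $\bm{\sigma}(D)$ is the image of $\mathbb{S}^{d-1}$ under the continuous map $\mathbf{x} \mapsto (\sigma_1 \mathbf{x}, \ldots, \sigma_n \mathbf{x})$, hence compact, so $P = \bigcup_{\bm{\sigma}\in \{-1,1\}^n} \bm{\sigma}(D)$ is a closed subset of $\prod_{i=1}^n \mathbb{S}^{d-1}$. Thus $\bigcap_{\epsilon > 0} P_\epsilon = P$, and by continuity of measure from above $\mu^*(P_\epsilon) \to \mu^*(P) = 1 - \mu^*(P^c) < 1$ as $\epsilon \downarrow 0$. Fix $\epsilon_0 > 0$ small enough that $\mu^*(P_{\epsilon_0}) < 1$.

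For step (ii), suppose for contradiction that the dynamics weakly polarize on average from every starting configuration. Then for every $\mathbf{X}_0 \in \prod_{i=1}^n \mathbb{S}^{d-1}$,
\begin{equation*}
\lim_{T \to \infty} \frac{1}{T}\sum_{t=1}^T \Pr\nolimits_{\mathbf{X}_0}\!\left(\mathbf{X}_t \in P_{\epsilon_0}\right) = 1.
\end{equation*}
Since the Cesàro partial sums are bounded by $1$, bounded convergence allows integrating this limit against $\mu^*$. Combining this with Fubini and the invariance identity $\int \Pr_{\mathbf{X}_0}(\mathbf{X}_t \in A)\, d\mu^*(\mathbf{X}_0) = \mu^*(A)$ (valid for every $t$ and every Borel $A$) gives
\begin{equation*}
1 = \int \lim_{T \to \infty} \frac{1}{T} \sum_{t=1}^T \Pr\nolimits_{\mathbf{X}_0}(\mathbf{X}_t \in P_{\epsilon_0})\, d\mu^*(\mathbf{X}_0) = \lim_{T \to \infty} \frac{1}{T} \sum_{t=1}^T \mu^*(P_{\epsilon_0}) = \mu^*(P_{\epsilon_0}),
\end{equation*}
contradicting $\mu^*(P_{\epsilon_0}) < 1$. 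Hence weak polarization on average must fail from at least one starting configuration.

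For the ``in particular'' clause: under continuity, \Cref{thm:wpoa} supplies the reverse implication, and the contrapositive of what was just proved supplies the forward implication. The only item requiring care is the exchange of limit with integral, but since all quantities lie in $[0,1]$, bounded convergence dispatches this cleanly; I do not anticipate any substantial obstacle beyond this bookkeeping.
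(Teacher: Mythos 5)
Your proof is correct and follows essentially the same route as the paper's: both integrate the time-averaged occupation of $P_{\epsilon}$ against the invariant measure $\mu^*$, use the invariance identity to pin this average at $\mu^*(P_{\epsilon})<1$ (with closedness of $P$ justifying the choice of $\epsilon$), and conclude that weak polarization on average must fail from some starting configuration. The only difference is cosmetic: the paper extracts the bad starting point directly via Fatou's lemma, while you argue by contradiction with bounded convergence; both are valid.
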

\begin{proof}
The assumption and taking limits for $\epsilon\to 0$ with continuity of measure implies that there exists some $\epsilon>0$ such that $\mu^*(P_{\epsilon}^c)>0$. By invariance, this means that for every $t\geq 0$, $
    \Pr_{\mathbf{X}_0\sim \mu^*}(\mathbf{X}_t\in P_{\epsilon})=\mu^*(P_{\epsilon}^c)>0,$
which trivially implies that for every $T\geq 0$,
\begin{equation*}
    \mathbb{E}_{\mathbf{X}_0\sim \mu^*}\left[\mathbb{E}\left[\frac{\sum_{t=0}^{T-1} I_{P_{\epsilon}}}{T}\right]\right]=1-\mu^*(P_{\epsilon}^c),
\end{equation*}
where the inner expectation is over the sequence of random vectors $\{\xi_t\}_{t=1}^T$. By Fatou's lemma,
\begin{equation*}
    \mathbb{E}_{\mathbf{X}_0\sim \mu^*}\left[\liminf_{T\to \infty}\mathbb{E}\left[\frac{\sum_{t=0}^{T-1} I_{P_{\epsilon}}}{T}\right]\right]\leq \liminf_{T\to \infty}\mathbb{E}_{\mathbf{X}_0\sim \mu^*}\left[\mathbb{E}\left[\frac{\sum_{t=0}^{T-1} I_{P_{\epsilon}}}{T}\right]\right]= 1-\mu^*(P_{\epsilon}^c).
\end{equation*}
This implies that there exists some fixed $\mathbf{X}_0\in \prod_{i=1}^n \mathbb{S}^{d-1}$ such that 
\begin{equation*}
    \liminf_{T\to \infty}\mathbb{E}\left[\frac{\sum_{t=0}^{T-1} I_{P_{\epsilon}}}{T}\right]\leq 1-\mu^*(P_{\epsilon}^c)<1,
\end{equation*}
when the process is started at $\mathbf{X}_0$. By definition, it follows that the process does not weakly polarize on average with this choice of starting vector. The last claim then follows from combining the above with \Cref{thm:wpoa}.
\end{proof}

As an application, HJMR proved that for $d=2$ and $\xi_t$ is uniform on the unit circle, every finite set of vectors strongly polarizes. By \Cref{prop:forms}, this implies weak polarization on average. Because their dynamics are oblivious, admit a density for each single agent system, and every point is accessible, the unique invariant measure for the dynamics of a single agent is clearly uniform. \Cref{thm:supp} shows that every invariant measure on the \emph{joint} system must be supported on $P$.

To summarize, under very natural conditions, these geometric dynamics have invariant distributions in $P$, essentially by $P$-invariance of the dynamics. In the oblivious case, the projections of \emph{any} invariant distribution on each component will be unique under natural regularity assumptions and the only ambiguity is in how these distributions are coupled together. If one can show that these $P$-invariant measures are the \emph{only} such measures, or contrapositively that no invariant measure can put nontrivial mass on the complement of $P$, then the previous result immediately supplies weak polarization on average.

\subsection{HJMR Model and Weak Polarization}

Given the results in the previous sections, it is perhaps tempting to believe that under some necessary form of irreducibility, strong polarization will arise. In this section, we exhibit a model where \emph{weak polarization} on restricted subsets of initial configurations holds, but strong polarization provably does not. We do so in a very simple version of the original HJMR model by showing how the dynamics in this case relate to basic properties of simple random walks in $\mathbb{Z}$ and more generally, infinite balls-in-bins processes.

In this section, we will consider the HJMR dynamics where the update vector is drawn uniformly from a complete orthonormal set of basis vectors in $\mathbb{R}^d$; because the HJMR dynamics commute with orthogonal transformations, we will assume that these are the standard basis vectors. Towards showing that weak, but not strong, polarization holds, we need the following simple calculation showing that the random updates in this process are \emph{commutative}, greatly simplifying updating.

\begin{lemma}
\label{lem:update}
Let $\mathbf{v}\in \mathbb{S}^{d-1}$ be an arbitrary starting vector. Then for any sequence $\xi_1,\ldots,\xi_T$ of update vectors drawn from $\{\mathbf{e}_1,\ldots,\mathbf{e}_d\}$, the vector $\mathbf{z}$ obtained after $T$ steps of the dynamics given by \Cref{eq:hjmr} and starting at $\mathbf{v}$ satisfies
\begin{equation*}
    \mathbf{z}\propto \left((1+\eta)^{N^{(1)}_T}v_1,\ldots,(1+\eta)^{N^{(d)}_T}v_d)\right),
\end{equation*}
where $N^{(i)}_T$ is the number of times $\xi_t=\mathbf{e}_i$ up to time $T$.

In particular, with orthonormal update vectors, the vector obtained after $T$ steps of the dynamics depends only on the multiset $\{\xi_1,\ldots,\xi_T\}$ and not on the order they arrive.
\end{lemma}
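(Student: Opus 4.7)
The plan is to prove this by induction on $T$, keeping track of the unnormalized vector and appealing to proportionality at the end. The key single-step observation is that if $\xi_t = \mathbf{e}_i$, then the unnormalized update rule from \Cref{eq:hjmr} becomes
\begin{equation*}
    \mathbf{X}_t + \eta \langle \mathbf{X}_t, \mathbf{e}_i\rangle \mathbf{e}_i = \mathbf{X}_t + \eta (X_t)_i \mathbf{e}_i,
\end{equation*}
which leaves every coordinate of $\mathbf{X}_t$ unchanged except the $i$-th, which gets multiplied by $(1+\eta)$. The normalization step is simply division by a positive scalar, so it does not affect the claim made up to proportionality.

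For the base case $T=0$, all $N^{(i)}_0 = 0$, so the claim reduces to $\mathbf{z} \propto \mathbf{v}$, which is immediate. For the inductive step, assume that after $T$ steps the vector is proportional to $((1+\eta)^{N^{(1)}_T}v_1, \ldots, (1+\eta)^{N^{(d)}_T}v_d)$; denote this unnormalized vector by $\mathbf{w}$ and write the current vector as $\mathbf{X}_T = c\mathbf{w}$ for some $c>0$ (the normalizing constant). If $\xi_{T+1} = \mathbf{e}_j$, then the unnormalized next vector is $c\mathbf{w} + \eta c w_j \mathbf{e}_j$, which is $c$ times the vector obtained from $\mathbf{w}$ by scaling the $j$-th coordinate by $(1+\eta)$. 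Since $N^{(j)}_{T+1} = N^{(j)}_T + 1$ and $N^{(i)}_{T+1} = N^{(i)}_T$ for $i \neq j$, this matches the claimed expression exactly, and passing to the normalized vector only absorbs the positive factor $c$ into the proportionality.

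The final commutativity claim is then immediate: the vector $\mathbf{z}$ obtained after $T$ steps depends only on the exponents $N^{(1)}_T, \ldots, N^{(d)}_T$, which in turn depend only on the multiset of update vectors, not on their order. I do not expect any real obstacle here: the entire lemma hinges on the fact that the orthonormal basis diagonalizes the update — each coordinate evolves independently, and the update along $\mathbf{e}_i$ commutes with the update along $\mathbf{e}_j$ for $j\neq i$ because they act on disjoint coordinates. The only mild care is to track things up to proportionality and to notice that the normalizing constants factor out cleanly in the induction.
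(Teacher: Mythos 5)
Your proof is correct and follows essentially the same route as the paper: the single-step observation that an update along $\mathbf{e}_i$ multiplies the $i$-th coordinate by $(1+\eta)$, together with the fact that the normalization factor is a positive scalar that passes through the update (the paper phrases this as homogeneity and defers all normalization to the end; your induction carrying the constant $c$ is the same idea made explicit).
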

\begin{proof}
Observe that the functional form of the update rule in \Cref{eq:hjmr} is homogeneous in the starting vector in that scaling the starting vector leads to the same update. As a consequence, one can perform the dynamics as stated by renormalizing after every step, or by applying the update rule without renormalizing until the end.

As a result of this observation, it suffices to show that for any vector $\mathbf{v}\in \mathbb{R}^{d}\setminus \{\mathbf{0}\}$, the effect of applying the update vector $\xi=\mathbf{e}_i$ is proportional to $(v_1,\ldots,(1+\eta)v_i,\ldots,v_d)$. This is sufficient as then we may iterate these updates without renormalizing until the only the end of the $T$ unnormalized updates to deduce commutativity. This is now immediate to see from \Cref{eq:hjmr}.
\end{proof}

Notice that the vector $(N^{(1)}_t,\ldots,N^{(d)}_t)$ is precisely a balls-in-bins process with $d$ bins and $t$ balls. The identity of \Cref{lem:update} shows that to understand the behavior of the process with orthonormal update vectors, we need to understand the sample path properties of an infinite balls-in-bins process. This is done in the following lemma:

\begin{lemma}
\label{lem:bib}
Consider an infinite balls-in-bins process with $d$ labeled bins $\{1,\ldots,d\}$. Let $N^{(i)}_t$ denote the number of balls in bin $i$ at time $t$. Then almost surely, the following holds:
\begin{enumerate}
    \item For each fixed $i\in [d]$, $N^{(i)}_t= \max_{j\in [d]}  N^{(j)}_t$ for infinitely many $t$.
    
    \item For each fixed $i,j\in [d]$, $N^{(i)}_t=N^{(j)}_t= \max_{k\in [d]}  N^{(k)}_t$ for infinitely many $t$.
\end{enumerate}
\end{lemma}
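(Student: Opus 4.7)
The plan is to derive both statements from the Hewitt--Savage zero-one law applied to the i.i.d.\ sequence $\xi_1,\xi_2,\ldots$ of bin labels, combined with the symmetry of the balls-in-bins process under permutations of the $d$ bins. For part~1, consider the event $A_i=\{N_t^{(i)}=\max_k N_t^{(k)}\text{ for infinitely many }t\}$. It is invariant under finite permutations of the $\xi_s$: such a permutation changes only finitely many of the counts $N_t^{(k)}$, so the infinitely-often behavior is preserved. Hewitt--Savage then forces $\Pr(A_i)\in\{0,1\}$, while the bin-relabeling symmetry shows $\Pr(A_i)$ is the same for every $i$. At every time some bin attains the max and there are only $d$ bins, so pigeonhole yields $\Pr(\bigcup_i A_i)=1$, and a finite union bound together with the symmetry upgrades each $\Pr(A_i)$ from positive to $1$.

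For part~2, apply the same template to $E_{ij}=\{N_t^{(i)}=N_t^{(j)}=\max_k N_t^{(k)}\text{ for infinitely many }t\}$: it is exchangeable, so $\Pr(E_{ij})\in\{0,1\}$, and symmetric in the unordered pair $\{i,j\}$. The reduction I will use is the identity
\[
F\;:=\;\{\max_k N_t^{(k)}\text{ is non-unique for infinitely many }t\}\;=\;\bigcup_{i\ne j}E_{ij},
\]
which holds by pigeonhole: at each non-unique-max time some specific pair is tied at the top, and there are only $\binom{d}{2}$ pairs, so at least one such pair must be tied at infinitely many times. Once $\Pr(F)=1$ is established, a finite union bound and the pairwise symmetry force $\Pr(E_{ij})>0$ for every pair, whence $\Pr(E_{ij})=1$ by the Hewitt--Savage dichotomy.

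To show $\Pr(F)=1$, let $u_t^\star$ denote the smallest-index bin attaining the max at time $t$. A short case analysis on where the $(t+1)$-st ball lands shows that any transition $u_t^\star\ne u_{t+1}^\star$ forces the max to be non-unique at either time $t$ or time $t+1$: the current leader can only be dethroned by a tied competitor receiving the ball, and a new bin can only enter the argmax by catching up and tying the existing maximum. It therefore suffices to prove that $u_t^\star$ changes identity infinitely often, for which by symmetry I focus on bin~$1$. Part~1 already gives $u_t^\star=1$ (equivalently, bin~$1$ is at the max) infinitely often almost surely. Conversely, restrict to the random subsequence of times at which a ball lands in $\{1,2\}$: on this subsequence $N_t^{(2)}-N_t^{(1)}$ traces a simple symmetric random walk in $\mathbb{Z}$, which by recurrence visits strictly positive values infinitely often, and at such times bin~$1$ lies strictly below bin~$2$ and hence strictly below the global max. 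Thus bin~$1$ leaves the argmax i.o., $u_t^\star$ oscillates i.o., ties at the max occur i.o., and $\Pr(F)=1$.

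The main obstacle is this last step. Local CLT heuristics suggest $\Pr(\text{max tied at time }t)=\Theta(1/\sqrt{t})$, so a direct second-moment Borel--Cantelli computation looks plausible but painful because of the dependence structure across times. The argument above sidesteps any such computation by exploiting the structural fact that the identity of the max bin can only shift through a tie, reducing the infinitely-often question about ties to an infinitely-often question about a specific bin's membership in the argmax, which part~1 together with one-dimensional simple random walk recurrence handles cleanly.
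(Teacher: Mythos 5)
Your proof is correct, and for part~1 it is exactly the paper's argument: Hewitt--Savage applied to the exchangeable event, bin-relabeling symmetry, pigeonhole, and a union bound to get positive probability before invoking the zero-one law. For part~2 the skeleton also matches the paper --- exchangeability of $E_{ij}$, pigeonhole over the $\binom{d}{2}$ pairs, symmetry, and the dichotomy --- and both arguments rest on the same structural fact that, because the counts are nondecreasing and change by at most one per step, the leading bin can only be displaced through a tie at the maximum. The one place you diverge is in establishing that ties at the maximum occur infinitely often: the paper argues by contradiction that if ties were eventually absent, some single bin would be the unique maximum for all but finitely many times, an event of probability zero by part~1 applied to any other bin; you instead show directly that bin~1 both attains the maximum infinitely often (part~1) and falls strictly below it infinitely often, the latter via recurrence of the embedded simple symmetric random walk $N^{(2)}_t-N^{(1)}_t$ along the (almost surely infinite, which is worth stating) subsequence of times at which a ball lands in bins $1$ or $2$. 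Your route imports one extra ingredient (one-dimensional random walk recurrence) but makes the mechanism forcing ties fully explicit; the paper's contradiction is shorter and needs nothing beyond part~1. Both are sound.
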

\begin{proof}
Recall from the Hewitt-Savage zero-one law that any event depending on a sequence of i.i.d. random variables that is invariant under finite permutations has probability $0$ or $1$ (Theorem 2.5.4 of \cite{durrett2019probability}). The first listed event evidently has this property. Clearly by the pigeonhole principle, there surely exists a (random) index $i^*\in [d]$ that is infinitely often the largest, and so by symmetry, the probability that any fixed $i\in[d]$ has this property is at least $1/d$ by a union bound. The Hewitt-Savage zero-one law then implies that the probability is one.

For the second claim, note that the previous claim and the fact that each $N^{(i)}_t$ is nondecreasing in time and changes by at most one in each round implies that infinitely often, there must be at least two indices of bins that are at least as large as the rest (if not, then with positive probability a single index is the unique largest element for all but finitely many times). Again by the pigeonhole principle and symmetry, each fixed pair $(i,j)$ has probability at least ${d \choose 2}^{-1}$ of having this property, hence by the Hewitt-Savage zero-one law has probability one.
\end{proof}

\begin{lemma}
\label{lem:bibfar}
Consider an infinite balls-in-bins process with $d$ labeled bins $\{1,\ldots,d\}$. Let $N^{(i)}_t$ denote the number of balls in bin $i$ at time $t$. Then for any fixed constant $K\in \mathbb{N}$, the probability that there exists two bins $i\neq j$ such that $\vert N^{(i)}_t-N^{(j)}_t\vert\leq K$ is $O(1/\sqrt{t})$, where the implicit constant depends on $K,d$, but not $t$.
\end{lemma}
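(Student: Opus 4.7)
The plan is to reduce to a standard anti-concentration estimate for a one-dimensional lazy random walk and then union bound over the $\binom{d}{2}$ pairs. Concretely, fix $i \neq j$ and consider the difference $D_t := N^{(i)}_t - N^{(j)}_t$. At each step, $D_t$ increases by $1$ with probability $1/d$, decreases by $1$ with probability $1/d$, and is unchanged with probability $1 - 2/d$, so $D_t$ is a lazy simple random walk on $\mathbb{Z}$. It suffices to show $\Pr(|D_t| \leq K) = O(1/\sqrt{t})$ for each fixed pair, since there are only $\binom{d}{2}$ pairs and the implicit constant will depend on $d$ and $K$ but not $t$.

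To bound $\Pr(|D_t| \leq K)$, I would condition on $M_t$, the number of balls placed in bin $i$ or bin $j$ by time $t$; note $M_t \sim \text{Binomial}(t, 2/d)$. Conditional on $M_t = m$, the number of those balls landing in bin $i$ is $B \sim \text{Binomial}(m, 1/2)$, and one has the identity $D_t = 2B - m$. The classical Stirling estimate gives
\begin{equation*}
\max_k \Pr(B = k) = \binom{m}{\lfloor m/2 \rfloor} 2^{-m} = O(1/\sqrt{m}),
\end{equation*}
and since the event $\{|D_t| \leq K\}$ corresponds to at most $K+1$ possible values of $B$, we obtain
\begin{equation*}
\Pr(|D_t| \leq K \mid M_t = m) \leq (K+1) \cdot O(1/\sqrt{m}) = O_K(1/\sqrt{m}).
\end{equation*}

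The final step is to remove the conditioning using concentration of $M_t$: by a standard Chernoff bound, $\Pr(M_t \leq t/d) \leq e^{-ct}$ for some $c = c(d) > 0$, so splitting on this event yields
\begin{equation*}
\Pr(|D_t| \leq K) \leq e^{-ct} + O_K(1/\sqrt{t/d}) = O_{K,d}(1/\sqrt{t}).
\end{equation*}
A union bound over the $\binom{d}{2}$ pairs then completes the proof. There is no real obstacle here; the only mild subtlety is keeping track of the dependence on $K$ and $d$ in the implicit constants, which is absorbed since both are fixed. The conceptual point is just that a lazy symmetric random walk with $\Theta(t)$ effective steps has its probability at any fixed point decaying like $1/\sqrt{t}$ by the local central limit theorem.
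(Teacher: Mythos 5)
Your proof is correct and follows essentially the same route as the paper: reduce to the lazy random walk $N^{(i)}_t - N^{(j)}_t$, handle laziness by noting the number of relevant steps exceeds $t/d$ up to an exponentially small error, apply an $O(1/\sqrt{m})$ anti-concentration bound for the non-lazy walk over the $\leq K+1$ relevant values, and union bound over pairs. The only cosmetic difference is that you derive the anti-concentration step explicitly from the binomial maximum via Stirling, whereas the paper cites it as a known consequence of the central limit theorem or the Littlewood--Offord lemma.
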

\begin{proof}
Observe that the distribution of $N^{(i)}_t-N^{(j)}_t$ for any fixed $i\neq j$ is equal to that of a lazy, simple random walk on $\mathbb{Z}$ that updates with probability $2/d$. It is well-known (for instance, by the central limit theorem or Littlewood-Offord lemma) that for non-lazy simple random walks on $\mathbb{Z}$, the probability that the walk has magnitude at most some fixed constant $K$ at time $t$ is $O\left(\frac{1}{\sqrt{t}}\right)$, where the implicit constant depends on $K$. To account for laziness, note that with all but exponentially small probability, the number of updates to the $i$ or $j$th bin at time $t$ will exceed, say, $t/d$. We can then apply the result for non-lazy simple random walks and absorb the error term, just changing the implicit constant to depend on $d$ as well to account for the smaller number of steps. We may then apply a union bound over such pairs.
\end{proof}

We can now turn to a full analysis of the polarization behavior of this system.

\begin{theorem}
\label{thm:ortho}
Consider the HJMR dynamics in $\mathbb{S}^{d-1}$ given by \Cref{eq:hjmr}, where $\xi_t$ is drawn i.i.d. and uniformly from $\{\mathbf{e}_1,\ldots,\mathbf{e}_d\}$. Let $\mathbf{X}_0 = (\mathbf{X}_0^{(1)},\mathbf{X}_0^{(2)})\in (\mathbb{S}^{d-1})^2$ be such that the two vectors have equal supports
but are not equal up to sign. Then:
\begin{enumerate}
    \item The vector $\mathbf{X}_t$ weakly polarizes.
    \item With probability $1$, $\|\mathbf{X}_t^{(1)}-\mathbf{X}_t^{(2)}\|_2\not\to 0,2$. In particular, no nontrivial pair of starting vectors polarizes with nonzero probability.
\end{enumerate}
\end{theorem}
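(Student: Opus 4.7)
Let $S \subseteq [d]$ denote the common support of $\mathbf{X}_0^{(1)}$ and $\mathbf{X}_0^{(2)}$, and write $v_l^{(k)}$ for the $l$-th coordinate of $\mathbf{X}_0^{(k)}$. The plan is to analyze both parts through \Cref{lem:update}, which gives that at every $t$, $\mathbf{X}_t^{(k)}$ is the normalization of the unnormalized vector $\mathbf{Y}_t^{(k)} := ((1+\eta)^{N_t^{(l)}} v_l^{(k)})_{l \in [d]}$, where $(N_t^{(l)})$ is the balls-in-bins count vector. Since coordinates outside $S$ vanish in both $\mathbf{Y}_t^{(k)}$, the dynamics are effectively supported on $S$.

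For part~(1), I fix $\epsilon > 0$ and choose $K \in \mathbb{N}$ large enough that $(1+\eta)^{-K}$ is small relative to $\epsilon$ and $\min_{l \in S, k} |v_l^{(k)}|$. Applying \Cref{lem:bibfar} with this $K$, with probability $1 - O(1/\sqrt{t})$ all pairs of bin counts differ by more than $K$; in particular, the argmax $i^* \in S$ of $(N_t^{(l)})_{l \in S}$ is unique and exceeds every other count in $S$ by at least $K$. Splitting $\|\mathbf{Y}_t^{(k)}\|_2^2$ into its $i^*$-th coordinate and the remainder then shows $\mathbf{X}_t^{(k)}$ lies within $O((1+\eta)^{-K})$ of $\sigma_k \mathbf{e}_{i^*}$, where $\sigma_k := \text{sgn}(v_{i^*}^{(k)}) \in \{\pm 1\}$. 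Since $(\sigma_1 \mathbf{e}_{i^*}, \sigma_2 \mathbf{e}_{i^*}) \in P$, this gives $\rho(\mathbf{X}_t, P) < \epsilon$ on the high-probability event, and hence $\Pr(\rho(\mathbf{X}_t, P) \geq \epsilon) = O(1/\sqrt{t}) \to 0$.

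For part~(2), the plan is to exhibit an infinite random subsequence of times along which $\|\mathbf{X}_t^{(1)} - \mathbf{X}_t^{(2)}\|_2$ is trapped in a closed interval strictly inside $(0, 2)$. Because $\mathbf{X}_0^{(1)} \neq \pm \mathbf{X}_0^{(2)}$ but both share support $S$ and lie on the unit sphere, the ratios $r_l := v_l^{(1)} / v_l^{(2)}$ for $l \in S$ cannot all coincide with a common sign, so I can pick $i, j \in S$ with $r_i \neq r_j$. By \Cref{lem:bib}(2), almost surely the event $\{N_t^{(i)} = N_t^{(j)} = \max_l N_t^{(l)}\}$ occurs for infinitely many $t$. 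At each such $t$, rescaling by $(1+\eta)^{-N_t^{(i)}}$ produces $\mathbf{c}^{(k)}(\alpha) := (\alpha_l v_l^{(k)})_l$ with $\alpha_l := (1+\eta)^{N_t^{(l)} - N_t^{(i)}} \in [0,1]$ and $\alpha_i = \alpha_j = 1$, and $\mathbf{X}_t^{(k)}$ is the normalization of $\mathbf{c}^{(k)}(\alpha)$.

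The heart of the argument, and the step I expect to be the main obstacle, is to show that on the compact parameter space $\alpha \in [0,1]^{[d] \setminus \{i,j\}}$, the continuous function
\[
  F(\alpha) := \frac{|\langle \mathbf{c}^{(1)}(\alpha), \mathbf{c}^{(2)}(\alpha)\rangle|}{\|\mathbf{c}^{(1)}(\alpha)\|_2 \, \|\mathbf{c}^{(2)}(\alpha)\|_2}
\]
admits a maximum $M < 1$; a merely pointwise strict inequality would not preclude $1$ from being a limit point along the subsequence, so uniformity is essential. The bound will go through Cauchy-Schwarz: $F(\alpha) = 1$ forces $\mathbf{c}^{(1)}(\alpha) \parallel \mathbf{c}^{(2)}(\alpha)$, and inspecting the $i$-th and $j$-th coordinates (where $\alpha_i = \alpha_j = 1$) would force $r_i = r_j$, contradicting our choice. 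Compactness then upgrades the pointwise strict inequality to $M < 1$. At each $t$ in the infinite subsequence we therefore obtain $\|\mathbf{X}_t^{(1)} - \mathbf{X}_t^{(2)}\|_2 \in [\sqrt{2 - 2M}, \sqrt{2 + 2M}] \subset (0, 2)$, so the distance almost surely cannot converge to $0$ or $2$; the ``in particular'' statement that no nontrivial pair polarizes with positive probability follows at once.
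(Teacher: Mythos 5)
Your proposal is correct and follows essentially the same route as the paper: reduce to the balls-in-bins counts via \Cref{lem:update}, get weak polarization from the gap estimate of \Cref{lem:bibfar}, and use \Cref{lem:bib} to find infinitely many times with two tied maximal bins inside the common support, at which the two normalized vectors stay uniformly non-proximate in angle. Your compactness/Cauchy--Schwarz argument in part (2) simply makes explicit (and uniform, including the bound away from $2$) the ``strictly positive constant'' separation that the paper's proof asserts more tersely, so it is a careful rendering of the same idea rather than a different approach.
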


\begin{remark}
If the distribution of random updates is not uniform, but instead has strictly unequal probabilities associated to each basis vector, then it is easy to amend the argument to show that now \emph{strong} polarization holds between any two starting vectors with the same support, simply because both vectors will get concentrated about the basis vector in their support with the largest probability.

Moreover, note that there \emph{do} exist invariant distributions in this system, considered as $\prod_{i=1}^n \mathbb{S}^{d-1}$, that are not supported on $P$. However, this does not contradict \Cref{thm:supp} because \Cref{thm:ortho} only applies to vectors with equal supports, a full measure set. 
\end{remark}

\begin{proof}[Proof of \Cref{thm:ortho}]
Let $\mathbf{X}_0 = (\mathbf{X}_0^{(1)},\mathbf{X}_0^{(2)})\in (\mathbb{S}^{d-1})^2$ be as stated. Recall that to show weak polarization, we must show that for any fixed $\epsilon>0$, 
\begin{equation*}
    \Pr\left(\|\mathbf{X}_t^{(1)}-\mathbf{X}_t^{(2)}\|_2\land\|\mathbf{X}_t^{(1)}+\mathbf{X}_t^{(2)}\|_2>\epsilon\right)=o(1).
\end{equation*} 
Recall that from \Cref{lem:update}, the (unnormalized) $i$th component of each vector is amplified multiplicatively by a factor of $(1+\eta)^{N_t^{(i)}}$ before renormalization. From this, it is easy to see that for any fixed $\mathbf{X}_0$ and $\epsilon>0$, there exists some finite $K=K(\mathbf{X}_0,\epsilon)\in \mathbb{N}$ such that if $N_t^{(i)}\geq N_t^{(j)}+K$ for some $i\in [d]$ and all $j\neq i$, then both components of $\mathbf{X}_t$ are individually $\epsilon/2$-close to $\pm \mathbf{e}_i$. By the triangle inequality, it would then follow that $\mathbf{X}_t$ satisfies $\|\mathbf{X}_t^{(1)}-\mathbf{X}_t^{(2)}\|_2\land\|\mathbf{X}_t^{(1)}+\mathbf{X}_t^{(2)}\|_2\leq \epsilon$. Applying \Cref{lem:bibfar} with this value of $K$, we see that the probability of this event is $1-O\left(\frac{1}{\sqrt{t}}\right)$, where the constant depends only on $K,d$ and hence tends to $1$ with $t$ as needed. As $\epsilon>0$ was arbitrary, this proves the first claim.

For the second, we rely on the other properties we have already shown of the infinite balls-in-bins process. By assumption, as $\mathbf{X}_0^{(1)}$ and $\mathbf{X}_0^{(2)}$ are not equal up to sign, there exists a pair of indices such that the vectors restricted to these indices are not equal up to a constant. Without loss of generality, we assume that these indices are $1,2$. By \Cref{lem:bib}, there almost surely exist infinitely many times $t$ such that $N^{(1)}_t=N^{(2)}_t= \max_{k\in [d]}  N^{(k)}_t$. By the unnormalized form of the directions at each such time $t$ given by \Cref{lem:update}, it follows that at each such time, the first and second components of each vector remain proportional to their value at the beginning of this process but are now scaled by a factor that is at least $1$ after normalization. In particular, at each such time $t$ where this holds, the distance between $\mathbf{X}_t^{(1)}$ and $\mathbf{X}_t^{(2)}$ exceeds some strictly positive constant. Because this occurs infinitely often almost surely, the claim follows.
\end{proof}

\section{Conclusions}
In this paper, we have attempted to more closely study the polarization phenomenon that arises in geometric models as considered by HJMR. We introduce different forms of polarization and consider how these relate to more general Markov principles. Finally, we consider these various forms of polarization in concrete models to extend the class of natural dynamics that exhibit this property. We hope that some of the techniques we introduce here to complement the original results of HJMR prove useful in understanding the probabilistic properties of these models.

However, the most general version of this question remains open: are there very general, \emph{easily verified}, conditions on the update rule that ensures that some form of polarization arises (perhaps under some sort of irreducibility assumption)? Are there other interesting models of geometric opinion formation under which strong polarization holds? Understanding which sorts of dynamics lead to polarization, and also a better sense of why, may shed significant insight into the theoretical underpinnings of this phenomenon.

%
\section{Acknowledgments}
We thank Ron Peretz for graciously suggesting the alternate proof of \Cref{thm:party} that appears here, which both simplified and strengthened the original analysis.

\bibliographystyle{alpha}
\bibliography{Polarization.bib}

\end{document}